\def\@seccntformat#1{\@ifundefined{#1@cntformat}%
   {\csname the#1\endcsname.\space}
   {\csname #1@cntformat\endcsname}}
\newcommand\section@cntformat{\Large\bf\thesection.\space}
\newcommand\subsection@cntformat{\large\bf\thesubsection.\space}
\newcommand\subsubsection@cntformat{\bf\thesubsubsection.\space}
\newcommand{\Z}{\mathbb{Z}}
\newcommand{\F}{\mathbb{F}}
\begin{document}

\title{Additive perfect codes in Doob graphs%
\thanks{This research is supported by National Natural Science Foundation of China (61672036),
Technology Foundation for Selected Overseas Chinese Scholar, Ministry of Personnel of China (05015133),
Excellent Youth Foundation of Natural Science Foundation of Anhui Province (No.1808085J20),
and the Program of fundamental scientific researches
of the Siberian Branch of the Russian Academy of Sciences No.I.1.1.
(No.0314-2016-0016).}
}


\author{Minjia~Shi
\and
Daitao~Huang
\and
\href{https://orcid.org/0000-0002-8516-755X}{Denis~S.~Krotov}
}

\authorrunning{M.~Shi, D.~Huang, D.~Krotov} 

\institute{M.~Shi \at
School of Mathematical Sciences, Anhui University, Hefei, Anhui, 230601,
China \\
              \email{smjwcl.good@163.com}           
           \and
           D.~Huang \at
             School of Mathematical Sciences,
             Anhui University, Hefei, Anhui, 230601, China\\
             \email{dthuang666@163.com}
             \and
            D.~S.~Krotov \at
            Sobolev Institute of Mathematics,
            pr. Akademika Koptyuga 4,
            Novosibirsk 630090, Russia \\
            \email{krotov@math.nsc.ru}
}

\date{Received: 2018-06 / Accepted: 2018-11}

\maketitle

\begin{abstract}
The Doob graph $D(m,n)$ is the Cartesian product of $m>0$ copies of the Shrikhande graph and $n$ copies of the complete graph of order $4$.
Naturally, $D(m,n)$ can be represented as a Cayley graph on the additive group $(Z_4^2)^m \times (Z_2^2)^{n'} \times Z_4^{n''}$, where $n'+n''=n$.
A set of vertices of $D(m,n)$ is called an additive code if it forms a subgroup of this group.
We construct a $3$-parameter class of additive perfect codes in Doob graphs
and show that the known necessary conditions of the existence of additive $1$-perfect codes in $D(m,n'+n'')$ are sufficient.
Additionally, two quasi-cyclic additive $1$-perfect codes are constructed in $D(155,0+31)$ and $D(2667,0+127)$.
\keywords{Distance regular graphs \and
Additive perfect codes \and
Doob graphs \and
Quasi-cyclic codes \and
Tight 2-designs}
\subclass{94B05 \and 94B25 \and 05B40}
\end{abstract}

\section{\Large Introduction}\label{s:intro}

Perfect codes are a fascinating structure in coding theory, which attracts attention again and again.
The existence of perfect codes has been studied for various metrics, in particular, for the Hamming metric \cite{Tiet:1973}, \cite{ZL:1973}. Generally we consider a distance-regular graph $G(V,E)$ due to the important role of perfect codes in distance-regular graphs. A $1$-perfect code in a graph $G(V,E)$ is a subset $C$ of $V$, which is an independent set such that every vertex in $V \backslash C$ is adjacent to exactly one vertex in $C$.

The Doob graph $D(m,n)$ is the Cartesian product of $m$ copies of the Shrikhande graph and $n$ copies of the complete graph of order $4$, where the Shrikhande graph is a strongly regular graph with $16$ vertices and $48$ edges with each vertex having degree $6$.
All $D(m,n)$ with the same value $2m+n$ have the same parameters as distance-regular graphs; the partial case $m=0$ corresponds to the $4$-ary Hamming graph.
In \cite{Kro:perfect-doob}, the author completely solved the problem of existence of linear $1$-perfect codes in Doob graphs (a linear code in Doob graph forms a module over the Galois ring $\mathrm{GR}(4^2)$)
and proposed an open problem about the additive $1$-perfect codes (an additive code forms a module over $\Z_4$).
In the current paper, we are aimed at showing that for arbitrary odd $\Delta \geq 3$, even $\Gamma$ and $n'' \in \{4,7,10,\ldots,2^{\Delta}-1\}$, there exists an additive $1$-perfect code in $D(m,n'+n'')$, where $m=\frac{2^{2\Delta+\Gamma}-2^{\Delta+\Gamma}-2n''}{6}$,  $n'=\frac{2^{\Gamma+\Delta}-1-n''}{3}$. In particular, we construct $2$ codes that are both $1$-perfect and quasi-cyclic in $D(m,0+n'')$. Together with the results in \cite{Kro:perfect-doob} for even $\Delta$, our construction solves the problem of existence of additive $1$-perfect codes in Doob graphs for all feasible parameters.

 The material is arranged as follows. The next section compiles the  background necessary to the forthcoming sections. Section 3 contains the main result of this paper. Three quasi-cyclic additive $1$-perfect codes are listed in Section $4$ (one of them was known before).  Section $5$ concludes the article, and points out some open problems.
\section{\Large Preliminaries}\label{s:prel}
\subsection{\large\bf Galois rings}\label{ss:galois}
Let $\mathbb{Z}$ denote the ring of integers, and let $\mathbb{Z}_p = \mathbb{Z} / p\mathbb{Z}$
denote the factor-ring of residue classes of $\mathbb{Z}$ modulo $p$.
If $\mathbb{M}$ is a ring or a module over a ring, then $\mathbb{M}^+$ denotes the additive group of $\mathbb{M}$.
If $h(x)$ is a basic irreducible polynomial of degree $m$ over $\Z_4$ and $\varsigma$ is a root $h(x)$, then any element in the residue class ring $\Z_4[x]/(h(x))$ can be written as $h_0+h_1\varsigma+\ldots+h_{m-1}\varsigma^{m-1}$, which could also be viewed as the vector $(h_0,h_1,\ldots,h_{m-1})$ over $\Z_4$, where $h_0,~h_1,~\ldots,~h_{m-1}$ run through $\Z_4$ independently. In fact, the map defined by
$$\phi: ~~~\Z_4[x]/(h(x)) \rightarrow \Z_4^m,$$
$$h_0+h_1\varsigma+\ldots+h_{m-1}\varsigma^{m-1}\mapsto  (h_0,h_1,\ldots,h_{m-1})$$
is a $\Z_4$-module isomorphism from $\Z_4[x]/(h(x))$ to $\Z_4^m$.
 As usual, denote by Galois ring $\mathrm{GR}(4^m)$ the residue ring $\Z_4[x]/(h(x))$, and we denote by $\mathrm{GR}(4^m)^*$ the set of units of $\mathrm{GR}(4^m)$.
 The {\it Teichmuller} set $ \mathcal{T}=\{x \in \mathrm{GR}(4^m)|x^{2^{m}}=x\}$ is a set of representatives of the {\it residue field $\F_{2^m}\simeq \mathrm{GR}(4^m)/(2).$}
 It is known that $\mathrm{GR}(4^m)=\mathcal{T}\oplus 2\mathcal{T}$
($2$-adic decomposition of $\mathrm{GR}(4^m)$),
and that the {\it group of units} of the Galois ring is $\mathrm{GR}(4^m)^*=\mathcal{T^*}\oplus 2\mathcal{T},$ with $\mathcal{T^*}=\mathcal{T}\setminus \{0\}.$
The generalized Frobenius map of $\mathrm{GR}(4^m)$ defined by
$$ f: \mathrm{GR}(4^m)\rightarrow \mathrm{GR}(4^m),\qquad c=a+2b \mapsto c^f=a^2+2b^2$$
is a ring automorphism of $\mathrm{GR}(4^m)$, where $a,b \in \mathcal{T}$. Moreover, if $\sigma$ is a ring automorphism of $\mathrm{GR}(4^m)$, then $\sigma=f^i$ for some $i$, $0\leq i\leq m-1$.
See more details in \cite[Chapter 6]{Wan:4ary}.


\subsection{\large\bf Representation of the Doob graph}\label{ss:doob}




Denote by $D(m,n)$ the Cartesian product  $\mathrm{Sh}^m \times K^n$
of $m$ copies
of the Shrikhande graph and $n$ copies
of the complete $4$-vertex graph.
If $m>0$, then $D(m,n)$ is called a Doob graph.
The \emph{Shrikhande graph} $\mathrm{Sh}$
is the Cayley graph of the additive group $\mathbb{Z}_4^{2+}$ of $\mathbb{Z}_4^{2}$
with the generating set
$S = \{01,30,33,03,10,11\}$.
That is, the vertex
set is the set of elements of $\mathbb{Z}_4^{2}$, two elements being adjacent
if and only if their difference is in $S$.
Next we will use two different representations of
the complete $4$-vertex graph $K = K_4$ as a Cayley graph.
The first representation of $K$ is
the Cayley graph on $\mathbb{Z}_2^{2+}$
with the generating set $\{01,10,11\}$.
At second, $K$ will be considered as
the Cayley graph of $\mathbb{Z}_4^+$
with the generating set $\{1,2,3\}$.

Take the set of $(2m+2n'+n'')$-tuples
$(x_1,{\ldots},x_{2m},y_1,{\ldots},y_{2n'},z_1,$ ${\ldots},z_{n''})$ from
$\mathbb{Z}_4^{2m}\times\mathbb{Z}_2^{2n'}\times\mathbb{Z}_4^{n''}$,
$n'+n''=n$, as the vertex set of $D(m,n)$.
If a code
$C \subset \mathbb{Z}_4^{2m}\times\mathbb{Z}_2^{2n'}\times\mathbb{Z}_4^{n''}$
is closed with respect to addition, then we say it is \emph{additive}.
An additive code is necessarily closed
with respect to multiplication by an element of $\mathbb{Z}_4$.
So, it is in fact a submodule of the module
$\mathbb{Z}_4^{2m}\times\mathbb{Z}_2^{2n'}\times\mathbb{Z}_4^{n''}$
over $\mathbb{Z}_4$.
The natural graph distance in $D(m,n)$
provides a metric on
$\mathbb{Z}_4^{2m}\times\mathbb{Z}_2^{2n'}\times\mathbb{Z}_4^{n''}$,
which will be called the \emph{$D(m,n)$-metric}
(if $m>0$, a \emph{Doob metric}).
The \emph{weight} of a vertex $x$ of $D(m,n)$
is the distance from $x$ to $\overline 0$
(here and in what follows, $\overline 0$ denotes the zero element of the module,
i.e., the all-zero tuple, whose length is clear from the context).

If we study $1$-perfect codes, the vertices of weight $1$ are of special interest.
 Recall that in the case of $\mathbb{Z}_4^{2m}\times\mathbb{Z}_2^{2n'}\times\mathbb{Z}_4^{n''}$ with $D(m,n'+n'')$-metric,
every vertex of weight $1$ has one of the forms
$(0{\ldots}0xy0{\ldots}0|\overline 0|\overline 0)$,
$(\overline 0|0{\ldots}0vw0{\ldots}0|\overline 0)$,
$(\overline 0|\overline 0|0{\ldots}0z0{\ldots}0)$, where $x$ and $v$ are in odd positions,
$xy\in\{01,11,10,03,33,30\}$, $vw\in\{01,11,10\}$, $z\in\{1,2,3\}$,
and the vertical lines separate the three parts of the tuple of length $2m$, $2n'$, and $n''$, respectively.


\subsection{\large\bf\boldmath  Additive $1$-perfect codes in Doob graphs}\label{ss:additive}
A $1$-perfect code in a Doob graph $D(m,n)$ is a subset $C$ of $\mathbb{Z}_4^{2m}\times\mathbb{Z}_2^{2n'}\times\mathbb{Z}_4^{n''}$ which is an independent set such that every vertex in $ \mathbb{Z}_4^{2m}\times\mathbb{Z}_2^{2n'}\times\mathbb{Z}_4^{n''} \backslash C$ is adjacent to exactly one vertex in $C$.

\begin{remark}
 In general, the concept of perfect codes
is related with the following bound, known as the sphere-packing bound.
If the distance between any two different elements
of a code in a discrete metric space is more than $2e$,
then the cardinality of the code does not exceed the cardinality of the space divided by the cardinality of a ball of radius $e$.
The codes attending this bound are called perfect, or $e$-perfect.
As was noted in \cite{KoolMun:2000}, nontrivial $e$-perfect codes in the Doob graphs do not exist for $e\ge 2$ (the arguments are based on the known proof of the nonexistence of such codes in the $4$-ary Hamming graphs \cite{Tiet:1973}, \cite{ZL:1973} and on the algebraic connections between the Doob and Hamming graphs).
\end{remark}

  Define $(A \mid A' \mid A'' )$ as a check matrix of a $1$-perfect code $C$ in $D(m,n'+n'')$, that is to say, $C=\{c\in \Z_4^{2m}\times \Z_2^{2n'}\times \Z_4^{n''} \mid (A\mid A'\mid A'' )c^{{\mathrm T}}={\overline 0}^{{\mathrm T}} \}$, with the multiplication
 $(A\mid A'\mid A'' )(z_1\mid z_2\mid z_3)^{{\mathrm T}}$ for $z=(z_1\mid z_2\mid z_3)\in \Z_4^{2m}\times \Z_2^{2n'}\times \Z_4^{n''}$ as $Az_1^{{\mathrm T}}+2\cdot A'z_2^{{\mathrm T}}+A''z_3^{{\mathrm T}}$
 (here, $2\cdot$ can be formally understood as the group homomorphism $0\to 0$, $1\to 2$ from $\Z_2$ to $\Z_4$, acting coordinatewise on the column vector).

For a tuple $z\in \Z_4^{2m}\times \Z_2^{2n'}\times \Z_4^{n''}$,
the value $(A\mid A'\mid A'' ) z^{{\mathrm T}}$ is called the \emph{syndrome} of $z$.
We will say that $s$ is \emph{covered} by the coordinate $i$ with $2(m+n')+1 \leq i \leq 2(m+n')+n''$ or by the pair of coordinates $2i-1$, $2i$ with  $1 \leq i\le m+n'$,
if it is the syndrome of some $e$ of weight $1$
with the only non-zero value in the position $i$ or the only non-zero values in the positions $2i-1$, $2i$, respectively.
We also make an agreement that by a \emph{pair} of coordinates (or of columns of a check matrix) we will always mean a pair of coordinates
of form $2i-1$, $2i$, where $1 \leq i\le m+n'$, i.e., a pair that corresponds to the same $\mathrm{Sh}$ or $K$ component of $D(m,n'+n'')=\mathrm{Sh}^m\times K^{n'}\times K^{n''}$.

The following lemma is a straightforward reformulation of the definition of $1$-perfect codes in terms of check matrices.
\begin{lemma}\label{l:cover}
An additive code in $D(m,n'+n'')$ with a check matrix $(A \mid A' \mid A'' )$ is $1$-perfect if and only if the matrix does not have all-zero columns and
every nonzero syndrome from $\{(A \mid A' \mid A'' )z  \,|\,z\in \Z_4^{2m}\times \Z_2^{2n'}\times \Z_4^{n''}\}$
is uniquely covered by some of the first $m+n'$ pairs of coordinates or some of the last $n''$ coordinates.
\end{lemma}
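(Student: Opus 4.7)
The plan is to directly unpack the definition of a 1-perfect code in terms of the syndrome map $\sigma \colon z \mapsto (A\mid A'\mid A'')z^{\mathrm T}$. Since the additive code $C$ equals $\ker\sigma$, two vertices of $D(m,n'+n'')$ lie in the same coset of $C$ iff they have equal syndromes, and they are adjacent iff their difference has weight $1$. Hence $C$ is $1$-perfect iff each coset of $C$ contains exactly one vector of weight at most $1$. I split this into two conditions: (i) no nonzero vector of weight $\le 1$ has zero syndrome (i.e.\ $C$ is an independent set), and (ii) every nonzero element in the image of $\sigma$ is the syndrome of exactly one weight-$1$ vector. By the definitions of ``covered'' and ``uniquely covered'' given just before the lemma, (ii) is verbatim the statement that every nonzero syndrome is uniquely covered by one of the first $m+n'$ pairs or one of the last $n''$ coordinates.

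It remains to argue that (i) is captured by the combination of ``no all-zero columns'' together with (ii). The forward direction is immediate: independence forces every column to be nonzero, since otherwise the weight-$1$ vector with a single coefficient $1$ in the corresponding coordinate would lie in $C$. For the converse, assume that all columns are nonzero and that (ii) holds; a short case analysis over the three block types verifies (i). The idea is that a weight-$1$ vector with zero syndrome in a single block would force an algebraic collapse of the form $A_{2i-1}=3A_{2i}$ in a Shrikhande block, $A'_{2i-1}=A'_{2i}$ in a $\Z_2^2$ block, or $2A''_j=0$ in a $\Z_4$ block. Each such collapse immediately produces two distinct weight-$1$ vectors from that same block with a common nonzero syndrome (for example, in the $\Z_4$ case, the weight-$1$ vectors with coefficients $1$ and $3$ at position $j$ both give syndrome $A''_j$), which contradicts the uniqueness in (ii).

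Combining the above yields the claimed biconditional. The only step that requires any care is the case analysis just sketched; it is short because it uses only the explicit enumeration of weight-$1$ vectors in $D(m,n'+n'')$ recalled in Subsection~\ref{ss:doob}, together with the elementary fact that in $\Z_4$ the noninvertible nonzero elements are exactly the $2$-torsion ones.
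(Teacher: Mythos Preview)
Your argument is correct and is exactly the ``straightforward reformulation'' the paper has in mind: the paper does not give a proof of this lemma at all, merely stating it as an immediate translation of the definition of $1$-perfect codes into check-matrix language and citing the general principle in \cite{HedGuz:2015}. Your write-up simply makes explicit the coset/syndrome argument and the short case analysis (over the three block types) needed to see that ``no all-zero columns'' together with unique covering forces independence, which the paper leaves to the reader.
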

This fact is a variant of a general principle \cite{HedGuz:2015}
of recognizing additive $1$-perfect codes in abelian groups with different metrics.
If the number of nonzero syndromes equals the number of weight-$1$ words, then it is sufficient to check that every nonzero syndrome is covered at least once.

Let us recall some important results on the additive $1$-perfect codes in Doob graphs.

\begin{lemma}[\cite{Kro:perfect-doob}]\label{1}
Assume that there exists an additive $1$-perfect code in $\Z_4^{2m}\times \Z_2^{2n'}\times \Z_4^{n''}$ with the Doob $D(m,n'+n'')$-metric. Then for some even $\Gamma \geq 0$ and integer $\Delta \geq 2$,
\begin{eqnarray}
  2m+n'+n'' &=& (2^{\Gamma+2\Delta}-1)/3, \label{EQUA1}\\
  3n'+n'' &=& 2^{\Gamma+\Delta}-1,\label{EQUA2} \\
  n'' & \leq & 2^{\Delta}-1,\qquad n''\ne 1.\label{EQUA3}
\end{eqnarray}
\end{lemma}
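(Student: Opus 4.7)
The plan is to pass to the syndrome group $S:=(\Z_4^{2m}\times\Z_2^{2n'}\times\Z_4^{n''})/C$ and count how many elements of $S$ are hit by weight-$1$ words of the three coordinate types. Since $4S=0$, the structure theorem for finite abelian groups gives $S\cong\Z_4^\Delta\times\Z_2^\Gamma$ for some integers $\Delta,\Gamma\geq 0$, so $|S|=2^{2\Delta+\Gamma}$ and the $2$-torsion subgroup $S[2]=2\Z_4^\Delta\times\Z_2^\Gamma$ has order $2^{\Delta+\Gamma}$. The sphere-packing identity $|V|=|C|\cdot|B_1|$ becomes $|S|=1+3(2m+n'+n'')$; since $|S|$ is a power of $2$ and $(2^k-1)/3$ is an integer only when $k$ is even, the exponent $2\Delta+\Gamma$ must be even, so $\Gamma$ is even, and equation~\eqref{EQUA1} follows.

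I then classify the $3(2m+n'+n'')$ nonzero weight-$1$ syndromes by whether they lie in $S[2]$. By Lemma~\ref{l:cover}, the six syndromes $\pm A_{2i-1},\pm A_{2i},\pm(A_{2i-1}+A_{2i})$ from a Shrikhande pair must be distinct and nonzero, forcing each of $A_{2i-1}$, $A_{2i}$, $A_{2i-1}+A_{2i}$ to lie outside $S[2]$, so all six are non-torsion. The three syndromes from each $\Z_2$-pair lie in $2S\subseteq S[2]$. For each $\Z_4$-coordinate, $\pm A''_i$ lie outside $S[2]$ (otherwise they would coincide) while $2A''_i\in S[2]$. Equating $2$-torsion syndromes covered to $|S[2]|-1$ yields $3n'+n''=2^{\Delta+\Gamma}-1$, which is~\eqref{EQUA2}; the inequality of~\eqref{EQUA3} follows at once because $\{2A''_i\}_{i=1}^{n''}$ are distinct nonzero elements of $2S$ and $|2S|=2^\Delta$.

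For $\Delta\geq 2$, I use the Doob-graph assumption $m\geq 1$: take any Shrikhande pair and project $A_{2i-1}, A_{2i}, A_{2i-1}+A_{2i}$ onto the $\Z_4^\Delta$-summand; each image must avoid $2\Z_4^\Delta$. When $\Delta=1$ the non-torsion elements of $\Z_4$ are $\{1,3\}$, and any two of them (including repetitions) sum to an element of $\{0,2\}=2\Z_4$, a contradiction. To exclude $n''=1$, I sum all $2$-torsion weight-$1$ syndromes: each $\Z_2$-pair contributes $4(A'_{2i-1}+A'_{2i})=0$, and each $\Z_4$-coordinate contributes $2A''_i$, so the total is $2\sum_{i=1}^{n''}A''_i$. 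By $1$-perfectness this equals $\sum_{s\in S[2]\setminus\{0\}}s$, which a direct coordinate-by-coordinate calculation shows to vanish whenever $\Delta\geq 2$. Hence $2\sum A''_i=0$; for $n''=1$ this forces $A''_1\in S[2]$, contradicting $A''_1\notin S[2]$.

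The main difficulty is the clean organization of the torsion/non-torsion dichotomy: verifying that for every coordinate type the $\pm$-distinctness conditions of Lemma~\ref{l:cover} are equivalent to the six (respectively three or two) ``non-torsion'' syndromes being placed into $S\setminus S[2]$ correctly, and then confirming the identity $\sum_{s\in S[2]\setminus\{0\}}s=0$ in the mixed module $\Z_4^\Delta\times\Z_2^\Gamma$. Once these pieces are set, the derivations of~\eqref{EQUA1}--\eqref{EQUA3} and of the side conditions $\Gamma$ even, $\Delta\geq 2$, $n''\ne 1$ are routine consequences.
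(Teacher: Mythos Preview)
The paper does not prove Lemma~\ref{1}; it is quoted verbatim from \cite{Kro:perfect-doob} as a known necessary condition, so there is no in-paper proof to compare against.

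Your argument is correct and is essentially the standard syndrome-group proof one would expect for this kind of statement. A few minor remarks. First, your exclusion of $\Delta=0$ is implicit rather than stated: you should note that $m\ge 1$ already forces $S$ to contain an element of order $4$ (any column of a Shrikhande pair), so $\Delta\ge 1$ before your $\Delta\ne 1$ argument applies. Second, in the $n''\ne 1$ step you assert that the $2A''_i$ are distinct; this is true, but the reason is precisely the unique-covering part of Lemma~\ref{l:cover} (two equal values would mean the same order-$2$ syndrome is covered twice), and it would be cleaner to say so explicitly. Third, the vanishing of $\sum_{s\in S[2]\setminus\{0\}}s$ only needs $\Delta+\Gamma\ge 2$, which follows from $\Delta\ge 2$ as you note; you might simply say that $S[2]\cong\Z_2^{\Delta+\Gamma}$ and that the sum of all elements of $\Z_2^k$ is zero once $k\ge 2$. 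With these small clarifications the proof is complete and self-contained.
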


\begin{lemma}[\cite{Kro:perfect-doob}]\label{2}
For every $m$, $n'$ and $n''$ satisfying the statement of Lemma \ref{1} with even $\Delta$,
there is an additive $1$-perfect code in
$\mathbb{Z}_4^{2m}\times \mathbb{Z}_2^{2n'}\times \mathbb{Z}_4^{n''}$ with $D(m,n'+n'')$-metric.
\end{lemma}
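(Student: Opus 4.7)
The plan is to apply Lemma~\ref{l:cover}: since \eqref{EQUA1} is the sphere-packing equality, it suffices to produce a check matrix $(A\mid A'\mid A'')$ whose Shrikhande, binary and quaternary coverage sets
\[
\{\pm a,\pm b,\pm(a+b)\},\qquad \{2a',2b',2(a'+b')\},\qquad \{c,-c,2c\}
\]
cover every nonzero syndrome at least once. A suitable $\Z_4$-module syndrome space $S$ has $|S|=2^{\Gamma+2\Delta}$ and, by the arithmetic of Lemma~\ref{1}, a distinguished $2$-torsion subgroup of order $2^{\Gamma+\Delta}$ that will have to absorb the binary and non-unit quaternary contributions. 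The construction then becomes a purely combinatorial partition problem on $S\setminus\{\overline 0\}$.

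The key structural input is that $\Delta$ is even, which guarantees $\mathrm{GR}(4^2)\hookrightarrow R:=\mathrm{GR}(4^\Delta)$ and hence a multiplicative subgroup $U=\{\pm 1,\pm\zeta,\pm(1+\zeta)\}\subset R^*$ of order $6$, where $\zeta^2+\zeta+1=0$. Under the identification $\mathrm{GR}(4^2)^+\cong\Z_4^2$, $U$ is literally the Shrikhande generating set $\{01,10,11,03,30,33\}$ of Section~\ref{ss:doob}, so every coset $rU\subset R^*$ is a bona fide Shrikhande coverage set realised by the column pair $(r,r\zeta)$. The divisibility $6\mid|R^*|$ (equivalent to $3\mid 2^\Delta-1$, which is again the even-$\Delta$ condition) enables a clean coset partition of $R^*$, and is precisely the arithmetical fact that fails in the odd-$\Delta$ case treated in the main result of this paper.

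I would then run a three-phase construction: (i) pick $n''$ quaternary columns $c_l\in R^*\subset S$ from pairwise distinct $\{\pm 1\}$-cosets so that the doubles $2c_l\in 2R$ are also pairwise distinct; (ii) partition the remaining non-$2$-torsion part of $S$ into Shrikhande coverage sets, using $U$-cosets in the ``trivial fibre'' and, for $\Gamma>0$, additional mixed-fibre pairs to sweep out the remaining copies of $R^*$; (iii) split the leftover $3n'$ elements of the $2$-torsion of $S$ into linear triples $\{u,v,u+v\}$ to furnish the binary pairs. The count matching in each phase follows directly from \eqref{EQUA1}--\eqref{EQUA3}, in particular from the admissible congruence classes of $n''$.

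The main obstacle is phase~(ii) when $\Gamma>0$: a Shrikhande pair with column second-coordinates $(v_1,v_2)\in\F_{2^\Gamma}^2$ automatically contributes to the three fibres $v_1$, $v_2$ and $v_1+v_2$ simultaneously, so the per-fibre tallies of $U$-coset covers must balance. I would parametrise Shrikhande pairs by elements of $(R^*/U)\times\F_{2^\Gamma}^2$ and verify by inclusion--exclusion that the fibre counts can be made uniform; the specific values of $n''$ in Lemma~\ref{1} are exactly those for which this balancing is achievable. For $\Gamma=0$ the whole phase~(ii) collapses to the immediate coset decomposition $R^*=\bigsqcup_r rU$, and the lemma becomes essentially transparent.
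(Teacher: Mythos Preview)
This lemma is not proved in the present paper; it is quoted from \cite{Kro:perfect-doob} as prior work, so there is no in-paper proof to compare against.

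That said, your structural insight is the right one and almost certainly coincides with Krotov's mechanism: for even $\Delta$ the subring $\mathrm{GR}(4^2)\hookrightarrow R=\mathrm{GR}(4^\Delta)$ supplies the order-$6$ group $U=\{\pm1,\pm\zeta,\pm(1+\zeta)\}\subset R^*$, and each coset $rU$ is exactly a Shrikhande coverage set realised by the column pair $(r,r\zeta)$. (The introduction of the paper alludes to this when it notes that a \emph{linear} code is a $\mathrm{GR}(4^2)$-module.) Your $\Gamma=0$ sketch is essentially complete once you add one sentence: the $n''$ quaternary columns must be taken three at a time from a common $U$-coset so that the remaining order-$4$ syndromes still form a union of $U$-cosets. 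The required divisibility $3\mid n''$ follows from \eqref{EQUA2} with $\Delta$ even, and the ``distinct doubles'' condition is met by choosing the $n''/3$ cosets to have Teichm\"uller parts in distinct $\langle\zeta\rangle$-orbits of $\mathcal T^*$.

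Where your proposal falls short of a proof is the $\Gamma>0$ phase. The sentence ``verify by inclusion--exclusion that the fibre counts can be made uniform'' is a promissory note, not an argument: you must actually exhibit an assignment of Shrikhande pairs to the nonzero fibres of $\F_{2^\Gamma}$. A clean way to do this---and exactly the device the present paper uses in Section~\ref{TTT,EVEN} for the analogous odd-$\Delta$ extension---is to partition $2\Z_2^{\Gamma}\setminus\{\overline 0\}$ into $(2^\Gamma-1)/3$ triples $\{\mathbf a_i,\mathbf b_i,\mathbf c_i\}$ with $\mathbf a_i+\mathbf b_i+\mathbf c_i=\overline 0$ (via $\mathrm{GF}(4)$-cosets in $\mathrm{GF}(2^\Gamma)$) and attach to each triple a full copy of the $\Gamma=0$ covering of $R^*$. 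Until that step is written out, the proposal is a correct outline rather than a proof.
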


The main result of this paper is the positive solution of the following problem \cite{Kro:perfect-doob}: for every value $(m,n',n'')$ satisfying (\ref{EQUA1}--\ref{EQUA3}) with odd $\Delta$ (except the case (7,0,7), considered in \cite[Sect.~6]{Kro:perfect-doob}), does there exist an additive $1$-perfect code in $\Z_{4}^{2m}\times \Z_2^{2n'}\times \Z_4^{n''}$ with the $D(m,n'+n'')$-metric?


\section{\Large Main construction}\label{s:main}
In this section, to determine the existence of additive $1$-perfect codes in $D(m,n'+n'')$, we firstly list the specific check matrix of a $1$-perfect code in $D(8,1+4)$ corresponding to the case of $\Gamma=0$ and $\Delta=3$.
Secondly, we set about constructing $1$-perfect codes with any odd $\Delta$, $\Gamma=0$ and $n''=4$ in $D(\frac{2^{2\Delta}-2^{\Delta}-8}{6},\frac{2^{\Delta}-5}{3}+4)$ based on the case of
$\Gamma=0$ and $\Delta=3$.
 Next, for any odd $\Delta$ and $\Gamma=0$ we construct $1$-perfect codes in $D(m,n'+n'')$ with $m,~n'$ and $n''$ satisfying conditions (\ref{EQUA1}--\ref{EQUA3}) in Lemma \ref{1}. Finally, infinite $1$-perfect codes are constructed with any odd $\Delta$ and even $\Gamma$ based on the case of $\Delta$ odd and $\Gamma=0$.

\subsection{\large\bf \boldmath ${n}''=4$, $\Gamma=0$ and $\Delta=3$}\label{ss:403}
  For the case $\Gamma=0$ and $\Delta=3$, there are two values of $(m,n',n'')$ satisfying conditions (\ref{EQUA1}--\ref{EQUA3}) in Lemma \ref{1}. One is $(7,0,7)$, the other is $(8,1,4)$.
 Noting that a $1$-perfect code in $D(7,0+7)$ has been constructed in \cite{Kro:perfect-doob}, we then construct a $1$-perfect code in $D(8,1+4)$.

   We list an example of a $1$-perfect code in $D(8,1+4)$ as follows.
  The check matrix, denoted by $(A_1\mid A_2 \mid A_3)$, is
 $$\left(
    \begin{array}{c@{\,}c@{\,\,\,}c@{\,}c@{\,\,\,}c@{\,}c@{\,\,\,}c@{\,}c@{\,\,\,}c@{\,}c@{\,\,\,}c@{\,}c@{\,\,\,}c@{\,}c@{\,\,\,}c@{\,} c@{\,}|@{\,}c@{\,}c@{\,}|@{\,}cccc}\label{TTT,814}
1&0&2&2&0&1&1&2&2&1&2&3&1&1&1&0&1&0&1&0&0&1\\
0&1&1&0&2&2&2&3&1&2&2&1&2&1&3&1&1&1&0&1&0&1\\
2&2&0&1&1&0&2&1&2&3&1&2&1&2&0&3&0&1&0&0&1&1
\end{array}
  \right).$$
It can be easily checked that every nonzero syndrome is covered.
Indeed, all nonzero syndromes are: $(1,0,0)^{\mathrm T}$, $(1,0,2)^{\mathrm T}$, $(1,2,0)^{\mathrm T}$, $(1,2,2)^{\mathrm T}$, $(1,3,0)^{\mathrm T}$, $(1,2,3)^{\mathrm T}$, $(1,1,0)^{\mathrm T}$, $(1,1,2)^{\mathrm T}$, $(1,1,3)^{\mathrm T}$, $(1,1,1)^{\mathrm T}$, $(2,0,0)^{\mathrm T}$, $(2,2,0)^{\mathrm T}$, $(2,2,2)^{\mathrm T}$,
 and their negatives and cyclic shifts.
The syndromes $(2,2,2)^{\mathrm T}$, $(0,0,2)^{\mathrm T}$, $(1,1,1)^{\mathrm T}$,  $(0,0,1)^{\mathrm T}$ (and their negatives) are covered by the last four coordinates.
The syndrome $(0,2,2)^{\mathrm T}$ and its cyclic shifts are covered by the pair of $\Z_2$ coordinates.
The syndromes $(1,0,2)^{\mathrm T}$, $(0,1,2)^{\mathrm T}$, $(1,1,0)^{\mathrm T}$ (and their negatives) are covered by the $1$st pair of coordinates;
the syndromes $(1,2,2)^{\mathrm T}$, $(2,3,1)^{\mathrm T}$, $(3,1,3)^{\mathrm T}$, by $4$th pair;
the syndrome $(1,1,2)^{\mathrm T}$ and its cyclic shifts, by $7$th;
the syndrome $(0,1,3)^{\mathrm T}$ and its cyclic shifts, by $8$th.
From the matrix, it is easy to see that if some syndrome $(a,b,c)^{\mathrm T}$ is covered, then the cyclic shifts  $(c,a,b)^{\mathrm T}$ and  $(b,c,a)^{\mathrm T}$ are covered too.


 \subsection{\large\bf\boldmath $n''=4$, $\Gamma=0$ and $\Delta$ odd}\label{TTT,40ODD}
 In this subsection, we recursively construct additive $1$-perfect codes in $D(m,n'+n'')$
 for any odd $\Delta\ge 3$, $\Gamma=0$, $m=\frac{2^{2\Delta}-2^{\Delta}-8}{6}$, $n'=\frac{2^{\Delta}-5}{3}$, $n''=4$.
To illustrate the approach, we separately consider the case $\Delta=5$.
\subsubsection{\textbf{The first recursive step}}\label{TTT,FIRST}
    Firstly, we start with the case $\Delta=5$, $\Gamma=0$, and we ensure there exists an additive $1$-perfect code in $D(164,9+4)$. To prove the claim, we have to find a check matrix which covers all nonzero syndromes.
        Note that the elements of order $2$ of height $2$ over $\Z_4$ are exactly $(2,0)^{{\mathrm T}}$, $(0,2)^{{\mathrm T}}$, $(2,2)^{{\mathrm T}}$ and the elements of order $4$ of height $2$ over $\Z_4$ are exactly $(0,1)^{{\mathrm T}}$, $(0,3)^{{\mathrm T}}$, $(1,0)^{{\mathrm T}}$, $(1,1)^{{\mathrm T}}$, $(1,2)^{{\mathrm T}}$, $(1,3)^{{\mathrm T}}$,
    $(2,1)^{{\mathrm T}}$, $(2,3)^{{\mathrm T}}$, $(3,0)^{{\mathrm T}}$, $(3,1)^{{\mathrm T}}$, $(3,2)^{{\mathrm T}}$, $(3,3)^{{\mathrm T}}$.
    Note that $164$ pairs exactly cover $164\times 6=8\times 6+3\times 56+12\times 64=8\times 6+3\times | \mathrm{GR}(4^3)^*|+12\times |\mathrm{GR}(4^3)|$ syndromes. To make our construction, we choose any two elements $\mu$, $ \nu$ in $\mathrm{GR}(4^3)^*$ such that $\mu+\nu$ is a unit again, then take $a_1,a_2,\ldots,a_{56} \in \mathrm{GR}(4^3)^*$ with $a_i+a_{57-i}=0$ and $\{a_1,a_2,\ldots,a_{64}\}=\mathrm{GR}(4^3)$.  Then add the matrices
  $$ \left(
     \begin{array}{c@{\ }c@{\ \ .\,.\ \ }c@{\ }cc@{\ }c@{\ \ .\,.\ \ }c@{\ }cc@{\ }c@{\ \ .\,.\ \ }c@{\ }c}
       a_1\mu & a_1\nu  & a_{28}\mu & a_{28}\nu & a_1\mu & a_1\nu &   a_{64}\mu & a_{64}\nu &
         a_1\mu & a_1\nu & a_{64}\mu & a_{64}\nu\\
       2 & 0   & 2 & 0 & 1 & 0  & 1 & 0 & 1 & 1 &   1 & 1 \\
       0 & 2 & 0 & 2 & 0 & 1 &   0 & 1 & 2 & 3 &   2 & 3 \\
     \end{array}
   \right),$$
  $$\frac{1}{2}\left(
                          \begin{array}{c@{\ }c@{\ \ .\,.\ \ } c@{\ }c@{\ } }
                            a_{57}\mu & a_{57}\nu  & a_{64}\mu & a_{64}\nu \\
                            2 & 0  & 2 & 0 \\
                            0 & 2  & 0 & 2 \\
                          \end{array}
                        \right)$$
                         to the left ($\Z_4^2$-part) and the middle ($\Z_2^2$-part) parts of the matrix $
   \left(
     \begin{array}{c|c|c}
       A_1 & A_2 & A_3 \\
       \overline 0 & \overline 0 & \overline 0 \\
       \overline 0 & \overline 0 & \overline 0 \\
     \end{array}
   \right)
   ,$ respectively, where $(A_1 \mid A_2 \mid A_3)$ is the check matrix of a $1$-perfect code in $D(8,1+4)$ constructed in Section~\ref{ss:403}.
    Then it is easy to check that all coordinates of the new check matrix, which could be seen as the combination of $
    \Z_4^2$-part, $\Z_2^2$-part and $\Z_4$-part,
   cover all nonzero syndromes. In fact, for any distinct $i,j=1,2,\ldots,64$, we have $a_i\mu \neq a_j\mu$, $a_i\nu \neq a_j\nu$ and $a_i(\mu+\nu) \neq a_j(\mu+\nu)$.
   The number of syndromes of order $2$ is $2^5-1=31$ while the  $\Z_2^2$-part covers $3\times(8+1)=27$ syndromes and the $\Z_4$-part covers $4$ syndromes of order $2$. And the number of syndromes of order $4$ is $(2^5-1)2^5=2^{10}-2^5=992$, while the corresponding coordinates of the $\Z_4^2$-part and the $\Z_4$-part cover $164\times 6+4\times 2=992$ syndromes.
\subsubsection{\textbf{The general case}}\label{ss:delta-gen}
  For the case $\Gamma=0, ~(\Delta-2)$ odd, let $(H\mid H'\mid H'')$ be a check matrix of a $1$-perfect code in $D(\widetilde{m},\widetilde{n'}+4)$, where $\widetilde{m}= \frac{2^{2(\Delta-2)}-2^{\Delta-2}-8}{6}$ and $\widetilde{n'}=\frac{2^{\Delta-2}-5}{3}$.  For the case $\Gamma=0$ and $\Delta$ odd, to obtain a $1$-perfect code in $D(m,n'+n'')$ we construct the check matrix as follows by noting that $3(n'-\widetilde{n'})=3\times 2^{\Delta-2}$ and $6m-6\widetilde{m}=15\times 4^{\Delta-2}-3\times 2^{\Delta-2}=12\times 4^{\Delta-2}+3( 4^{\Delta-2}- 2^{\Delta-2})$.

  We identify the elements of $\mathrm{GR}(4^{\Delta-2})$ with the corresponding vectors of length $\Delta-2$ over $\Z_4$, as described in Section~\ref{s:prel}.
  Choose any two distinct elements  $\alpha$, $\beta$ in $\mathrm{GR}(4^{\Delta-2})^{*}$ such that $\alpha+\beta$ is a unit again.
  Denote by $c_1,c_2,\ldots,c_{t}$ the units of $\mathrm{GR}(4^{\Delta-2})$.
   And denote by $c_1,c_2,\ldots,c_{s}$ the all elements of $\mathrm{GR}(4^{\Delta-2})$, where $t=(2^{\Delta-2}-1)2^{\Delta-2}$, $c_i+c_{t+1-i}=0$ for all $i=1,\ldots,t$ and $s=4^{\Delta-2}$.
   Define the matrices $B$, $W$, $V$, $D'$, $E$, $E'$, $E''$ as follows:
$$B=\left(
                                \begin{array}{ccccccc}
                                  c_1\alpha & c_1\beta & c_2\alpha & c_2\beta & \cdots & c_{\frac{t}{2}}\alpha & c_{\frac{t}{2}}\beta \\
                                  2 & 0 & 2 &0 & \cdots & 2 & 0 \\
                                  0 & 2 & 0& 2 & \cdots & 0 & 2 \\
                                \end{array}
                              \right);$$
$$W=\left(
                                 \begin{array}{c@{\ }cc@{\ }ccc@{\ }c}
                                   c_1\alpha & c_1\beta & c_2\alpha & c_2\beta & \cdots & c_{s}\alpha & c_{s}\beta \\
                                     0 & 1 & 0 & 1 & \cdots & 0 & 1   \\
                                     1 & 0 & 1 & 0 & \cdots & 1 & 0  \\
                                 \end{array}
                               \right),
\quad V=
\left(
                                 \begin{array}{c@{\ }cc@{\ }ccc@{\ }c}
                                   c_1\alpha & c_1\beta & c_2\alpha & c_2\beta & \cdots & c_{s}\alpha & c_{s}\beta  \\
                                     1 & 1 & 1& 1 & \cdots & 1 & 1  \\
                                    2 & 3 & 2 &3 & \cdots & 2 & 3  \\
                                 \end{array}
                               \right);$$
  $$D'=\frac12\left(
                          \begin{array}{c@{\ }cc@{\ }ccc@{\ }c}
                            c_{t+1}\alpha & c_{t+1}\beta & c_{t+2}\alpha & c_{t+2}\beta & \cdots & c_s\alpha & c_s\beta \\
                            2 & 0 & 2 & 0 & \cdots & 2 & 0 \\
                            0 & 2 & 0 & 2 & \cdots & 0 & 2 \\
                          \end{array}
                        \right);$$
$$
E=\left(
                                  \begin{array}{c}
                                    H \\
                                    \mathbf{0} \\
                                    \mathbf{0} \\
                                  \end{array}
                                \right),
\qquad
E'=\left(
                                  \begin{array}{c}
                                    H' \\
                                    \mathbf{0} \\
                                    \mathbf{0} \\
                                  \end{array}
                                \right),
\qquad
E''=\left(
                                  \begin{array}{c}
                                    H'' \\
                                    \mathbf{0} \\
                                    \mathbf{0} \\
                                  \end{array}
                                \right).
$$
Denote by $M$ the matrix $(BWVE\mid E' D'\mid E'')$.
Keeping the notations above, we have the following proposition.
\begin{proposition}\label{111}
Let the code $C$ be defined by the check matrix  $M$, constructed as above. Then $C$ is a $1$-perfect code in the Doob graph $D(m,n'+n'')$, where $m=\frac{2^{2\Delta}-2^{\Delta}-8}{6}$, $n'=\frac{2^{\Delta}-5}{3}$, $n''=4$ and $\Delta$ is odd.
\end{proposition}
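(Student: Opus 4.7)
The plan is to prove the proposition by induction on odd $\Delta\ge 3$, with base case $\Delta=3$ supplied by the check matrix $(A_1\mid A_2\mid A_3)$ of Section~\ref{ss:403}. First I will observe that the number of weight-$1$ vertices of $D(m,n'+n'')$ is $6m+3n'+3n''=(2^{2\Delta}-2^\Delta-8)+(2^\Delta-5)+12=2^{2\Delta}-1$, which equals the number of nonzero syndromes in $\Z_4^\Delta$; hence by Lemma~\ref{l:cover} and the remark after it, it suffices to verify that $M$ has no all-zero columns and that every nonzero syndrome is covered at least once.

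I would write each syndrome as $s=(s_1,s_2)$, with $s_1\in\mathrm{GR}(4^{\Delta-2})$ (identified with $\Z_4^{\Delta-2}$) coming from the top $\Delta-2$ rows and $s_2\in\Z_4^2$ from the two new rows, and then split by $s_2$. If $s_2=\overline 0$, only $E,E',E''$ can contribute, and they are just $(H\mid H'\mid H'')$ padded with zero rows, so by the inductive hypothesis every nonzero $s_1$ is covered. If $s_2\in\{(2,0),(0,2),(2,2)\}$ (the three order-$2$ values), the $i$-th Shrikhande pair of $B$ produces the six syndromes $(\pm c_i\alpha,(2,0))$, $(\pm c_i\beta,(0,2))$, $(\pm c_i(\alpha+\beta),(2,2))$, where the second coordinates use $-2=2$ to collapse onto the three listed values; since $\{c_i\}_{i=1}^{t/2}\cup\{-c_i\}_{i=1}^{t/2}$ exhausts the units of $\mathrm{GR}(4^{\Delta-2})$ by the pairing $c_i+c_{t+1-i}=0$, and since $\alpha,\beta,\alpha+\beta$ are all units, $B$ covers precisely the order-$2$ syndromes with $s_1$ a unit; a parallel argument with the $K_4$-pairs of $D'$, whose indices $c_{t+1},\ldots,c_s$ exhaust the non-units of $\mathrm{GR}(4^{\Delta-2})$ and which yield only the three positive combinations, covers the order-$2$ syndromes with $s_1$ a non-unit (including $s_1=\overline 0$). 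If $s_2$ has order $4$ (twelve values), the six signed combinations from one Shrikhande pair of $W$ give exactly the second coordinates $\{(0,\pm1),(\pm1,0),(\pm1,\pm1)\}$ and those of $V$ give the remaining $\{(\pm1,\pm2),(\pm2,\pm1)\}$; the index range $i=1,\ldots,s$ together with the fact that multiplication by the units $\alpha,\beta,\alpha+\beta$ permutes $\mathrm{GR}(4^{\Delta-2})$ then lets $s_1$ range arbitrarily.

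A term-by-term count $3t+3(s-t)+6s+6s+(4^{\Delta-2}-1)=16\cdot 4^{\Delta-2}-1=2^{2\Delta}-1$ confirms that no syndrome is missed or double-covered, and no column of $M$ is zero because each has a nonzero contribution either in its $s_1$-block (via the units $\alpha,\beta$) or in its $s_2$-block by construction. The main obstacle I anticipate is the order-$2$ bookkeeping: one must verify that the index sets $\{c_1,\ldots,c_{t/2}\}$ of $B$ (together with their negatives) and $\{c_{t+1},\ldots,c_s\}$ of $D'$, interacting with the pairing $c_i+c_{t+1-i}=0$ and with the three distinguished units $\alpha,\beta,\alpha+\beta$, reproduce exactly the unit/non-unit partition of $\mathrm{GR}(4^{\Delta-2})$ without any overlap, so that each of the three order-$2$ values of $s_2$ is paired with every $s_1\in\mathrm{GR}(4^{\Delta-2})$ exactly once.
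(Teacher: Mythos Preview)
Your proof is correct and follows essentially the same strategy as the paper's: both invoke the count $6m+3n'+3n''=2^{2\Delta}-1$ to reduce to showing that every nonzero syndrome is covered, and then carry out a block-by-block case analysis in which $E,E',E''$ handle the recursive step and $B,D',W,V$ handle the two new rows (you split cases by the last two coordinates $s_2$, the paper by the order of the full syndrome, but these partitions regroup the same pieces). One small slip to fix: the six $s_2$-values produced by a pair of $V$ are $(1,2),(3,2),(1,3),(3,1),(2,1),(2,3)$, not just $\{\pm(1,2),\pm(2,1)\}$; with this correction your twelve order-$4$ values of $s_2$ from $W$ and $V$ are indeed disjoint and exhaustive, and the rest of the argument goes through.
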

\begin{proof}
Note that there are at most $16^\Delta $
different syndromes.
Let us consider an arbitrary
$z\in \mathbb{Z}_4^{2m}\times \mathbb{Z}_2^{2n'}\times \Z_4^{n''}$
and its syndrome $s=Mz^{{\mathrm T}}$.
If $s$ is the all-zero column,
then $z\in C$.
Let us show that if $s$ is non-zero,
then there is a unique codeword $c=z-e$ adjacent to $z$.
For the existence, it is sufficient to find a weight-$1$ tuple $e$
with syndrome $s$.
Let us consider two cases.
\begin{enumerate}
  \item[(i)] If the order of $s$ is $2$, then it is covered by the $\Z_2^2$-part and the $\Z_4$-part.
Indeed, there are $2^\Delta$ elements of order $2$ in $\mathrm{GR}(4^\Delta)$,
while $3\widetilde{n'}=2^{\Delta-2}-5$ distinct syndromes with the last two rows $(0,0)^{\mathrm T}$ are covered by corresponding coordinates of $E'$ and $3(\widetilde{n'}-\widetilde{n''})=3\times 2^{\Delta-2}$ distinct syndromes with the last two rows $(2,0)^{\mathrm T}$, $(0,2)^{\mathrm T}$, $(2,2)^{\mathrm T}$ are covered by corresponding coordinates of $D'$. Except that, $E''$ covers $4$ distinct syndromes with the last two rows $(0,0)^{\mathrm T}$. Totally, $M$ covers $2^{\Delta-2}-5+3\times 2^{\Delta-2}+4=2^\Delta-1$ distinct elements in $\Z_4^\Delta$. That is to say, all syndromes of order $2$ are covered by corresponding coordinates of $M$.
  \item[(ii)] If the order of $s$ is $4$, then  it is covered by corresponding coordinates of the first part and corresponding coordinates of the third part. Indeed, there are $2^{2\Delta}-2^\Delta$ elements of order $4$ in $\mathrm{GR}(4^\Delta)$, while $6\widetilde{m}=2^{2(\Delta-2)}-2^{\Delta-2}-8$ distinct syndromes with the last two rows $(0,0)^{\mathrm T}$ are covered by $E$,
 $6\times \frac{t}{2}=3(2^{\Delta-2}-1)2^{\Delta-2}$ distinct syndromes with the last two rows $(2,0)^{\mathrm T}$, $(0,2)^{\mathrm T}$, $(2,2)^{\mathrm T}$ are covered by $B$, and $6\times s\times 2=12\times 4^{\Delta-2}$ distinct syndromes with the last two rows $(0,1)^{{\mathrm T}}$, $(0,3)^{{\mathrm T}}$, $(1,0)^{{\mathrm T}}$, $(1,1)^{{\mathrm T}}$, $(1,2)^{{\mathrm T}}$, $(1,3)^{{\mathrm T}}$,
  $ (2,1)^{{\mathrm T}}$, $(2,3)^{{\mathrm T}}$, $(3,0)^{{\mathrm T}}$,
    $(3,1)^{{\mathrm T}}$, $(3,2)^{{\mathrm T}}$, $(3,3)^{{\mathrm T}}$ are covered by corresponding coordinates of $W$ and $V$. Except that,
       $2\times 4=8$ distinct syndromes with the last two rows $(0,0)^{\mathrm T}$ are covered by corresponding coordinates of $E''$. Totally, corresponding coordinates of $M$ covers $2^{2(\Delta-2)}-2^{\Delta-2}-8+3(2^{\Delta-2}-1)2^{\Delta-2}+12\times 4^{\Delta-2}+8=2^{2\Delta}-2^{\Delta}$ distinct elements of order $4$ in $\Z_4^\Delta$. That is to say, all syndromes of order $4$ are covered by corresponding coordinates of $M$.
\end{enumerate}

It is easy to see that the choice of $e$ is unique.
\qed\end{proof}

\subsection{\large\bf\boldmath Increasing $n''$ when $\Gamma=0$ and $\Delta$ odd}\label{ss:n''}
   To construct more $1$-perfect codes we want to increase $n''$ based on the above check matrix.
   We start with a simple case and end up with a generalized case in this subsection.
\subsubsection{\textbf{\boldmath The special case $(7,0+7)$}}\label{sss:707}
An additive $1$-perfect code has already been found in \cite{Kro:perfect-doob}; we recall its description in Section~\ref{TTT,707}. However, to illustrate the technique of increasing $n''$, we construct another code. In Section~\ref{TTT,sub}, we prove that this code is not equivalent to that of \cite{Kro:perfect-doob}.

We begin with the $1$-perfect code in $D(8,1+4)$, see Subsection \ref{TTT,814}. Note that the part over $\Z_2$ covers  three syndromes $(2, 2, 0)^{{\mathrm T}}$,  $(0, 2, 2)^{{\mathrm T}}$, $(2, 0, 2)^{{\mathrm T}}$, which can be written as $2(3, 1, 0)^{{\mathrm T}},$ $2(3+1, 1+2, 0+1)^{{\mathrm T}}$, $2(1, 2, 1)^{{\mathrm T}}$. At the same time, the last two columns of the first part over $\Z_4$ cover six syndromes $(3,1,0)^{{\mathrm T}}$, $(1,2,1)^{{\mathrm T}}$, $3(3,1,0)^{{\mathrm T}}=(1,3,0)^{{\mathrm T}}$, $3(1,2,1)^{{\mathrm T}}=(3,2,3)^{{\mathrm T}}$, $(3+1,1+2,0+1)^{{\mathrm T}}=(0,3,1)^{{\mathrm T}}$,
 $3(3+1,1+2,0+1)^{{\mathrm T}}=(0,1,3)^{{\mathrm T}}$. Note that these nine syndromes are exactly $k(3,1,0)^{{\mathrm T}} $, $k(1,2,1)^{{\mathrm T}}$, $k(0,3,1)^{{\mathrm T}}$ for $k=1,2,3$. That means it is feasible to increase $n''$ by adding three columns $(3,1,0)^{{\mathrm T}}$, $(1,2,1)^{{\mathrm T}}$, $ (0,3,1)^{{\mathrm T}}$ and deleting the last two columns of the first part and the two columns of the second part over $\Z_2$.

\subsubsection{\textbf{The general case}}\label{sss:n''gen}
Based on the codes constructed in Subsection \ref{ss:delta-gen}, we start from the check matrix $M$.

Generally, the corresponding coordinates of every pair of columns $B_1$ and $ B_2$ over $\Z_4$ from the first part of the matrix cover six syndromes $B_1$, $B_2$, $B_1+B_2$, $3B_1$, $3B_2$, $3(B_1+B_2)$.

If $D_1$, $D_2$ is a pair of  columns in $D'$, then $(2D_1,2D_2)$ is of the form $\left(
                                                        \begin{array}{cc}
                                                          2c_i\alpha & 2c_i\beta \\
                                                          0 & 2 \\
                                                          2 & 0 \\
                                                        \end{array}
                                                      \right)$
for some $i\in\{1,2,\ldots,s\}$ (the choice of $i$ is not unique in general).
By the definition of $W$, it contains the pair of columns $(B_1,B_2)=\left(
                                                        \begin{array}{cc}
                                                          c_i\alpha & c_i\beta \\
                                                          0 & 1 \\
                                                          1 & 0 \\
                                                        \end{array}
                                                      \right)$.
This pair covers the syndromes
 $B_1$, $B_2$, $B_1+B_2$,
 $3B_1$, $3B_2$, $3B_1+3B_2$,
 while the syndromes
  $2B_1$, $2B_2$, $2B_1+2B_2$,
are covered by $(D_1,D_2)$.
That implies we can construct an additive $1$-perfect code in $D(m-1,(n'-1)+(4+3))$ by deleting these two pairs but adding the three columns
$(B_1,B_2,B_1+B_2)=\left(
 \begin{array}{ccc}
  c_i\alpha & c_i\beta & c_i(\alpha+\beta) \\
   1 & 0 & 1 \\
   0 & 1 & 1 \\
   \end{array}
   \right)$
to the third part of the matrix.

 Remembering that the matrix $(E\mid E'\mid E'')$ was obtained at the previous recursive step or corresponds to the case $(8,1+4)$ we can apply the same strategy as above or Subsection \ref{sss:707}. So, for every pair of columns $D_1$, $D_2$ in $D'$ or $E'$, we can find a pair $B_1$, $B_2$ in $W$ or $E$ such that $2B_1=2D_1$ and $2B_2=2D_2$. Then we can replace these $4$ columns by the new columns $B_1$, $B_2$, $B_1+B_2$ in the third part of the matrix.
   Using that algorithm, we can increase $n''$ up to $2^{\Delta}-1$ and decrease $n'$ down to $0$. Let $\overline{M}$ be the new matrix  constructed as above.

 So, once we have a $1$-perfect code in $D(m,n'+4)$ constructed as in Subsection~\ref{ss:delta-gen}, we also have additive $1$-perfect codes in $D(m-1,(n'-1)+(4+3))$, $D(m-2,(n'-2)+(4+6))$, $\ldots$,  $D(m-n',0+(2^{\Delta}-1))$.
Keeping the notations above, We obtain the following statement.

\begin{proposition}\label{TTT,PROP}
Let the code $C$ be defined by the check matrix  $\overline{M}$, constructed as above. Then
$C$ is a $1$-perfect code in Doob graphs $D(m,n'+n'')$, where $m$, $n'$, $n''$ satisfy conditions (\ref{EQUA1}--\ref{EQUA3}) in Lemma \ref{1} with $\Gamma=0$ and $\Delta$ odd.
\end{proposition}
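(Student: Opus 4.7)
The plan is to proceed by induction on the number $k$ of replacement operations applied to the check matrix $M$ of Proposition \ref{111}. The base case $k = 0$ is handled by Proposition \ref{111}. For the inductive step, I would show that a single replacement --- which removes a pair $(B_1, B_2)$ from the $\Z_4^2$-part together with a pair $(D_1, D_2)$ from the $\Z_2^2$-part satisfying $2B_i = 2D_i$ for $i = 1, 2$, and appends the three columns $B_1, B_2, B_1 + B_2$ to the $\Z_4$-part --- transforms a $1$-perfect code in $D(m, n' + n'')$ into one in $D(m - 1, (n' - 1) + (n'' + 3))$. A quick check confirms that $2m + n' + n''$ and $3n' + n''$ are preserved (they change by $-2 - 1 + 3 = 0$ and $-3 + 3 = 0$ respectively), so equations (\ref{EQUA1}) and (\ref{EQUA2}) continue to hold, and stopping once $n' = 0$ keeps us within the range allowed by (\ref{EQUA3}).

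The core of the argument is the syndrome-coverage invariant. The removed $\Z_4^2$-pair contributes the six order-$4$ syndromes $B_1, B_2, B_1 + B_2, 3B_1, 3B_2, 3(B_1 + B_2)$, while the removed $\Z_2^2$-pair contributes, via the $2\cdot$ homomorphism, the three order-$2$ syndromes $2B_1, 2B_2, 2(B_1 + B_2)$. Each newly inserted $\Z_4$-column $c \in \{B_1, B_2, B_1 + B_2\}$ covers the triple $\{c, 2c, 3c\}$, so the three new columns together cover exactly the same nine syndromes as the four removed columns. Since all other columns are untouched, the multiset of covered nonzero syndromes is preserved; by Lemma \ref{l:cover}, the new matrix defines a $1$-perfect code, completing the inductive step.

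The step I expect to require the most care is verifying that at every stage of the iteration the required matching pair actually exists. For pairs in $D'$ this is immediate from the construction in Subsection \ref{ss:delta-gen}: each such pair has top entries $c_i\alpha, c_i\beta$ for some non-unit $c_i$, and $W$ contains a pair with identical top entries, so $2B_j = 2D_j$ by construction. For pairs arising from the inherited block $(E, E', E'')$ --- traced back to the base code $(8, 1 + 4)$ of Subsection \ref{ss:403} through the recursion of Subsection \ref{ss:delta-gen} --- I would use the explicit matching displayed in Subsection \ref{sss:707} as the seed and observe that the recursive construction propagates this pair-matching property to every level. Iterating the replacement then lets $n''$ grow from $4$ in increments of $3$ up to the maximum $2^\Delta - 1$, covering every feasible triple $(m, n', n'')$ permitted by Lemma \ref{1} when $\Gamma = 0$ and $\Delta$ is odd.
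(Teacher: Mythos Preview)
Your proposal is correct and follows essentially the same approach as the paper: the paper presents the construction as an iterative algorithm (remove a matching $\Z_4^2$-pair and $\Z_2^2$-pair, append three $\Z_4$-columns, observe that the nine affected syndromes are covered identically), and you have simply cast this as a formal induction on the number of replacements. Your handling of the existence of matching pairs---immediate from $W$ for columns of $D'$, and inherited recursively from the explicit $(8,1+4)$ seed for columns of $E'$---also mirrors the paper's discussion in Subsections~\ref{sss:707} and~\ref{sss:n''gen}.
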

\subsection{\large\bf\boldmath Arbitrary even $\Gamma$ and odd $\Delta$}\label{TTT,EVEN}

  In this subsection, we are aimed at constructing a check matrix of a $1$-perfect code in $D(m^*,n'^*+n'')$, from a check matrix $(I|I'|I'')$ of a $1$-perfect code in $D(m,n'+n'')$ without rows of order $2$, where $6m^*-6m=(2^{\Gamma}-1)(2^{2\Delta}-2^{\Delta})$ and $3n'^*-3n'=(2^{\Gamma}-1)2^{\Delta}$.

The idea is the same as when we increased $\Delta$ in Section~\ref{TTT,40ODD},
but instead of acting recursively,
we increase the number of order-$2$ rows from $0$ to $\Gamma$ is one step
(the reason is that $\Z_4^{\Delta}\times (2\Z_2)^{\Gamma-2}$ cannot be
represented as a Galois ring for $\Gamma>2$).

Let the triples $\{\mathbf{a}_i,\mathbf{b}_i,\mathbf{c}_i\}$,
 $i=1,\ldots,{\frac{2^{\Gamma}-1}{3}}$,
 such that $\mathbf{a}_i^{\mathrm T},\mathbf{b}_i^{\mathrm T},\mathbf{c}_i^{\mathrm T} \in 2\mathbb{Z}_2^{\Gamma} \setminus \{ {\overline 0} \}$
and
$\mathbf{a}_i^{\mathrm T}+\mathbf{b}_i^{\mathrm T}+\mathbf{c}_i^{\mathrm T}=\overline 0$
form a partition of $2\mathbb{Z}_2^{\Gamma} \setminus \{ {\overline 0}\}$,
i.e., $$\bigcup_{i=1}^{({2^{\Gamma}-1})/{3}}\{\mathbf{a}_i^{\mathrm T},\mathbf{b}_i^{\mathrm T},\mathbf{c}_i^{\mathrm T}\}=2\mathbb{Z}_2^{\Gamma} \setminus \{ {\overline 0}\} $$
(such partition can be easily constructed from the multiplicative cosets of the subfield $\mathrm{GF}(4)$ in the field $\mathrm{GF}(2^\Gamma)$).
Let $u_1$, $u_2$, \ldots, $u_l$ be the units of $\mathrm{GR}(4^{\Delta})$ and $u_{l+1}$, \ldots, $u_{k}$ the non-units in $\mathrm{GR}(4^{\Delta})$, where $u_j+u_{l+1-j}=0$ for $j=1,2,\ldots, l$ and $l=(2^{\Delta}-1)2^{\Delta}$, $k=4^{\Delta}$. Choose any two elements $\gamma$, $\delta$ in $\mathrm{GR}(4^{\Delta})^*$ such that $\gamma+\delta$ is also a unit.
 Define the matrices $F=(F_1,\ldots,F_{\frac{2^{\Gamma}-1}{3}})$, $G'=(G'_1,\ldots,G'_{\frac{2^{\Gamma}-1}{3}})$, $E$, $E'$, $E''$:
 $$F_i=\left(
                            \begin{array}{c@{\ }cc@{\ }ccc@{\ }c}
                              u_1\gamma & u_1\delta & u_2\gamma & u_2\delta & \cdots & u_{\frac{l}{2}}\gamma & u_{\frac{l}{2}}\delta \\
                              \mathbf{a}_i & \mathbf{b}_i &  \mathbf{a}_i & \mathbf{b}_i & \cdots & \mathbf{a}_i & \mathbf{b}_i
                            \end{array}
                          \right),\qquad i=1,\ldots,\frac{2^{\Gamma}-1}{3};$$
$$G'_i= \frac12\left(
                            \begin{array}{c@{\ }cc@{\ }ccc@{\ }c}
                              u_{l+1}\gamma & u_{l+1}\delta & u_{l+2}\gamma & u_{l+2}\delta & \cdots & u_{k}\gamma & u_{k}\delta \\
                              \mathbf{a}_i & \mathbf{b}_i &  \mathbf{a}_i & \mathbf{b}_i & \cdots & \mathbf{a}_i & \mathbf{b}_i
                            \end{array}
                          \right),\qquad i=1,\ldots,\frac{2^{\Gamma}-1}{3};$$
$$J=\left(
                           \begin{array}{c}
                             I\\
                             \mathbf{0} \\
                           \end{array}
                         \right),
\qquad J'=\left(
                           \begin{array}{c}
                             I' \\
                             \mathbf{0} \\
                           \end{array}
                         \right),
\qquad J''=\left(
                           \begin{array}{c}
                             I'' \\
                             \mathbf{0} \\
                           \end{array}
                         \right).
$$
 Then, we denote by  $\widehat M$ the matrix $(FJ\mid J'G'\mid J'')$.

 \begin{theorem}\label{TTT,THE}
 Let $\Gamma$ be even and $\Delta$ be odd, and let the matrix $\widehat M$ be constructed as above. The set $C=\{c\in \Z_4^{2{m}^*}\times \Z_2^{2{n'}^*}\times \Z_4^{{n''}}\mid \widehat Mc^{{\mathrm T}}={\overline 0}^{{\mathrm T}}\}$ is an additive $1$-perfect code in the Doob graph $D(m,n'+n'')$.
 \end{theorem}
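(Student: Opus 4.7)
The plan is to invoke Lemma~\ref{l:cover}: I would verify that $\widehat M$ has no all-zero column and that every nonzero syndrome is covered by a weight-$1$ word, and then appeal to the numerical match. The syndrome space is $\Z_4^\Delta \times 2\Z_2^\Gamma$, of size $2^{2\Delta+\Gamma}$, and a direct computation using $6m^* = 6m + (2^\Gamma-1)(2^{2\Delta}-2^\Delta)$, $3n'^* = 3n' + (2^\Gamma-1)2^\Delta$ together with $3n'+n''=2^\Delta-1$ and $2m+n'+n''=(2^{2\Delta}-1)/3$ (from Lemma~\ref{1} applied to the starting $\Gamma=0$ code) gives $6m^*+3n'^*+3n'' = 2^{2\Delta+\Gamma}-1$. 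Hence the numbers of weight-$1$ words and of nonzero syndromes coincide, and coverage-at-least-once is the only thing I have to verify.

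I would then take a nonzero syndrome $s=(s_1,s_2)^{\mathrm T}$, $s_1\in\mathrm{GR}(4^\Delta)\cong \Z_4^\Delta$ (top $\Delta$ rows), $s_2\in 2\Z_2^\Gamma$ (bottom $\Gamma$ rows), and split into cases. When $s_2=\overline 0$, every column of $F_i$ and $G'_i$, and every pair-sum of such, has nonzero bottom part ($\mathbf{a}_i$, $\mathbf{b}_i$, or $\mathbf{c}_i=\mathbf{a}_i+\mathbf{b}_i$), so contributions come only from $J,J',J''$, which are just $(I\mid I'\mid I'')$ with $\Gamma$ zero rows appended; coverage of $s_1\ne 0$ is immediate from Proposition~\ref{TTT,PROP}. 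When $s_2\ne\overline 0$, the partition property picks out a unique $i$ with $s_2\in\{\mathbf{a}_i,\mathbf{b}_i,\mathbf{c}_i\}$, and I would further split by the order of $s_1$. If $s_1$ is a unit, the pair at position $j$ of $F_i$ covers the six syndromes $(\pm u_j\gamma,\mathbf{a}_i)^{\mathrm T}$, $(\pm u_j\delta,\mathbf{b}_i)^{\mathrm T}$, $(\pm u_j(\gamma+\delta),\mathbf{c}_i)^{\mathrm T}$ via the weight-$1$ values $\{01,10,11,03,30,33\}$; since $\gamma,\delta,\gamma+\delta$ are units, multiplication by each permutes $\mathrm{GR}(4^\Delta)^*$, and as $j$ runs over $1,\ldots,l/2$ together with the sign symmetry $u_j+u_{l+1-j}=0$, every unit is hit. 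If $s_1\in 2\mathcal{T}$ (order $\le 2$), the pair at position $j$ of $G'_i$ (with $j>l$) covers the three syndromes $(u_j\gamma,\mathbf{a}_i)^{\mathrm T}$, $(u_j\delta,\mathbf{b}_i)^{\mathrm T}$, $(u_j(\gamma+\delta),\mathbf{c}_i)^{\mathrm T}$ via the weight-$1$ values $\{01,10,11\}$ in $\Z_2^2$ (the formal $\tfrac12$ in $G'_i$ cancelling the $2\cdot$ in the syndrome evaluation), and as $j$ traverses $l+1,\ldots,k$ the non-units $u_j$ sweep out $2\mathcal{T}$, so each of $u_j\gamma$, $u_j\delta$, $u_j(\gamma+\delta)$ sweeps out $2\mathcal{T}$ as well.

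The main obstacle will be the bookkeeping of the $\tfrac12$ convention for $G'_i$: its entries live in $\mathrm{GR}(4^\Delta)/2\mathrm{GR}(4^\Delta)\cong\F_{2^\Delta}$ (top block) and in $\Z_2^\Gamma$ (bottom block), and $2\cdot G'_i$ must be shown to give precisely the syndromes listed above. Once this is pinned down, the three disjoint tallies $2^{2\Delta}-1$ (from $(I\mid I'\mid I'')$), $(2^\Gamma-1)(2^\Delta-1)2^\Delta$ (from the $F_i$'s), and $(2^\Gamma-1)2^\Delta$ (from the $G'_i$'s) add exactly to $2^{2\Delta+\Gamma}-1$, so coverage is forced to be bijective and uniqueness follows. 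Conceptually the argument is a one-step analogue of Section~\ref{ss:delta-gen}, with the roles of $B$ and $D'$ played jointly by $F$ and $G'$, iterated over the triples $\{\mathbf{a}_i,\mathbf{b}_i,\mathbf{c}_i\}$ that partition $2\Z_2^\Gamma\setminus\{\overline 0\}$; the combinatorial partition of the nilpotent part by $\mathrm{GF}(4)$-cosets in $\mathrm{GF}(2^\Gamma)$ is what makes a single recursion step (rather than $\Gamma/2$ steps) sufficient.
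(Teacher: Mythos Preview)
Your proposal is correct and follows essentially the same approach as the paper's own proof: split the syndrome $(s_1,s_2)$ according to whether $s_2=\overline 0$ (handled by $J,J',J''$ since $(I\mid I'\mid I'')$ is already a $1$-perfect check matrix) or $s_2\ne\overline 0$ (locate the unique triple $\{\mathbf a_i,\mathbf b_i,\mathbf c_i\}$, divide $s_1$ by the appropriate unit $\gamma$, $\delta$, or $\gamma+\delta$, and cover via $F_i$ when the quotient is a unit and via $G'_i$ when it is a non-unit), with uniqueness forced by the cardinality count. The only cosmetic difference is that the paper indexes by $j$ (distinguishing $j\le l/2$, $l/2<j\le l$, $j>l$) whereas you phrase the same trichotomy as ``$s_1$ a unit versus $s_1\in 2\mathcal T$'' together with the sign symmetry $u_j+u_{l+1-j}=0$.
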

\begin{proof}
Similarly to the proof of Proposition~\ref{111}, we
assume that the syndrome has the form of $\left(
                      \begin{array}{c}
                        \epsilon \\
                        \varepsilon \\
                      \end{array}
                    \right)
$  with $\epsilon^{\mathrm T} \in \Z_4^\Delta$ and $\varepsilon^{\mathrm T} \in 2\Z_2^\Gamma$.

Since $(I|I'|I'')$ is a check matrix of a $1$-perfect code,
the case $\varepsilon = \overline 0^{\mathrm T}$ is covered by the columns of $J$, $J'$, $J''$.

If $\varepsilon$ is nonzero,
then it is uniquely represented as
$\mathbf{a}_i$, $\mathbf{b}_i$,
or $\mathbf{c}_i$ for some $i$ from
$1$ to $\frac{2^{\Gamma}-1}{3}$.
Depending on $\varepsilon=\mathbf{a}_i$, $\varepsilon=\mathbf{b}_i$,
or $\varepsilon=\mathbf{c}_i$, we divide
$\epsilon$ by $\gamma$, $\delta$, or $\gamma+\delta$,
and obtain $u_j$ for some $j$ from $1$ to $4^\Delta$.
So, the syndrome has the form
$\displaystyle\binom{u_j\gamma}{\mathbf{a}_i}$,
$\displaystyle\binom{u_j\delta}{\mathbf{b}_i}$, or
$\displaystyle\binom{u_j\gamma+u_j\delta}{\mathbf{a}_i+\mathbf{b}_i}$.
 If $j\le \frac{l}2$, then the syndrome is covered by the pair of columns
$\left(\begin{array}{cc} u_j\gamma & u_j\delta \\ \mathbf{a}_i & \mathbf{b}_i \end{array}\right)$ of $F_i$.
 If $\frac{l}2 < j\le l$, then the syndrome is covered by the pair of columns
$\left(\begin{array}{cc} u_{l+1-j}\gamma & u_{l+1-j}\delta \\ \mathbf{a}_i & \mathbf{b}_i \end{array}\right)$ of $F_i$.
 If $j>l$, then the syndrome is covered by the pair of columns
$\left(\begin{array}{cc} \frac12 u_j\gamma & \frac12 u_j\delta \\ \frac12\mathbf{a}_i & \frac12\mathbf{b}_i \end{array}\right)$ of $G'_i$.
By numerical reasons, every syndrome is covered exactly once.
Thus, the proof is completed.
\qed\end{proof}

\begin{corollary}\label{TTT,CORO}
For every $m$, $n'$ and $n''$ satisfying the statement of Lemma~\ref{1} with odd $\Delta$,
there is a $1$-perfect code in
$\mathbb{Z}_4^{2m}\times \mathbb{Z}_2^{2n'}\times \mathbb{Z}_4^{n''}$ with $D(m,n'+n'')$-metric.

\end{corollary}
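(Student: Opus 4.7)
The plan is to derive the corollary directly from the two constructions established in the previous subsections, reducing the general case $(\Gamma,\Delta)$ to the base case $(0,\Delta)$. Given parameters $(m^*,n'^*,n'')$ satisfying Lemma~\ref{1} with odd $\Delta \ge 3$ and even $\Gamma \ge 0$, the case $\Gamma = 0$ is precisely the content of Proposition~\ref{TTT,PROP}, so it remains to treat the case $\Gamma \ge 2$.

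For $\Gamma \ge 2$, I would first peel off the $\Gamma$ layers by defining auxiliary parameters
\[
m \;=\; m^* \;-\; \frac{(2^{\Gamma}-1)(2^{2\Delta}-2^{\Delta})}{6}, \qquad n' \;=\; n'^* \;-\; \frac{(2^{\Gamma}-1)\, 2^{\Delta}}{3},
\]
and retaining the same value of $n''$. A short arithmetic check, which uses $\Gamma$ even so that $3 \mid 2^{\Gamma}-1$, shows that $m$ and $n'$ are non-negative integers satisfying the two equalities of Lemma~\ref{1} with $\Gamma$ replaced by $0$ and the same odd $\Delta$ and same $n''$; the inequalities $n'' \le 2^{\Delta}-1$ and $n'' \ne 1$ are inherited since they do not involve $\Gamma$.

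With these reduced parameters, Proposition~\ref{TTT,PROP} supplies an additive $1$-perfect code in $D(m,n'+n'')$ whose check matrix $(I \mid I' \mid I'')$ has no rows of order $2$, which is exactly the hypothesis needed to feed into Theorem~\ref{TTT,THE}. Applying Theorem~\ref{TTT,THE} to $(I \mid I' \mid I'')$ with the chosen even $\Gamma$ then produces the check matrix $\widehat{M}$ of an additive $1$-perfect code in $D(m^*,n'^*+n'')$, because the increments $6(m^*-m)=(2^{\Gamma}-1)(2^{2\Delta}-2^{\Delta})$ and $3(n'^*-n')=(2^{\Gamma}-1) 2^{\Delta}$ match exactly the parameter increase effected by that theorem.

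Since all the heavy combinatorial work has already been carried out in Proposition~\ref{TTT,PROP} and Theorem~\ref{TTT,THE}, I do not foresee a genuine obstacle: the corollary is essentially a bookkeeping exercise verifying that the arithmetic of reducing $\Gamma$ to $0$ closes up correctly and that the hypotheses of the two earlier results are met. The only mild subtlety is confirming the integrality and non-negativity of $m$ and $n'$ in the reduced problem, which is immediate from the divisibility $3 \mid 2^{\Gamma}-1$ for even $\Gamma$ together with the original inequalities of Lemma~\ref{1}.
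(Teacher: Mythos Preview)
Your proposal is correct and follows the same route the paper intends: the corollary is stated without proof immediately after Theorem~\ref{TTT,THE}, and is meant to be read as the combination of Proposition~\ref{TTT,PROP} (handling $\Gamma=0$) with Theorem~\ref{TTT,THE} (lifting to arbitrary even $\Gamma$), exactly as you lay out. Your observation that the $\Gamma=0$ check matrix has no rows of order~$2$---needed as input to Theorem~\ref{TTT,THE}---is implicit in the paper (the syndrome space for $\Gamma=0$ is all of $\Z_4^{\Delta}$), and you are right to make it explicit.
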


Combining with Lemmas~\ref{1} and~\ref{2}, we get necessary and sufficient conditions of the existence of additive $1$-perfect codes in $D(m,n'+n'')$.

\begin{theorem}\label{TTT,2}
Additive $1$-perfect codes in
$\mathbb{Z}_4^{2m}\times \mathbb{Z}_2^{2n'}\times \mathbb{Z}_4^{n''}$ with $D(m,n'+n'')$-metric exist if and only if
$m$, $n'$ and $n''$ satisfy (\ref{EQUA1}--\ref{EQUA3}) for some nonnegative integer $\Gamma$ and $\Delta$.
\end{theorem}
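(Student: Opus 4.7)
The proof plan is essentially a bookkeeping exercise that combines the three ingredients already available. The necessity direction is handed to us by Lemma~\ref{1}: if an additive $1$-perfect code exists in $\mathbb{Z}_4^{2m}\times \mathbb{Z}_2^{2n'}\times \mathbb{Z}_4^{n''}$ with the $D(m,n'+n'')$-metric, then the parameters $(m,n',n'')$ satisfy (\ref{EQUA1}--\ref{EQUA3}) for some even $\Gamma\ge 0$ and some integer $\Delta\ge 2$. So one direction of the biconditional is immediate.

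For the sufficiency direction I would split on the parity of $\Delta$. When $\Delta$ is even, the existence of an additive $1$-perfect code with parameters satisfying (\ref{EQUA1}--\ref{EQUA3}) is exactly the content of Lemma~\ref{2}. When $\Delta$ is odd, the corresponding existence statement is Corollary~\ref{TTT,CORO}, which we just derived from Theorem~\ref{TTT,THE}. These two cases exhaust all $\Delta\ge 2$ permitted by Lemma~\ref{1}, so every admissible triple $(m,n',n'')$ is realized by some additive $1$-perfect code.

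The remaining bit of care is the tiny-$\Delta$ corner of the statement, since the theorem is phrased with ``nonnegative integer $\Gamma$ and $\Delta$'' while Lemma~\ref{1} is stated for $\Delta\ge 2$. These degenerate parameters correspond to trivial situations (for instance $\Delta=0$ forces $n''=0$ by (\ref{EQUA3}), and the remaining conditions pin down small codes in pure $\mathbb{Z}_2^{2n'}$-spaces), which can be dismissed in one sentence by exhibiting the zero code in the appropriate $K_4^{\,n'}$ or by noting the constraints force $m=n'=n''=0$. The main obstacle in this proof is essentially nothing — all the real combinatorial work is in Proposition~\ref{111}, Proposition~\ref{TTT,PROP}, and Theorem~\ref{TTT,THE}; the final theorem is a packaging step whose only subtlety is making sure the parity cases are both cited and that the border cases of $\Delta$ do not create a spurious gap.
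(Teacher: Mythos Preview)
Your approach is exactly the paper's: necessity from Lemma~\ref{1}, sufficiency for even $\Delta$ from Lemma~\ref{2}, sufficiency for odd $\Delta$ from Corollary~\ref{TTT,CORO}, and a remark on the border values of $\Delta$.

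One small correction to your handling of the border cases. For $\Delta=0$, (\ref{EQUA3}) forces $n''=0$ and then (\ref{EQUA1})--(\ref{EQUA2}) force $m=0$, but \emph{not} $n'=0$; one lands in the quaternary Hamming graph $D(0,n')$, where the required additive $1$-perfect codes are the classical linear Hamming codes, not the zero code (the singleton $\{\overline 0\}$ is $1$-perfect only when $n'\le 1$). For $\Delta=1$, (\ref{EQUA3}) again gives $n''=0$, and then (\ref{EQUA2}) reads $3n'=2^{\Gamma+1}-1$, which has no integer solution when $\Gamma$ is even; so this case is vacuous. The paper records precisely these two observations in the sentence following the theorem.
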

In addition, we note that (\ref{EQUA1}--\ref{EQUA3}) imply  $\Gamma$ is even and $\Delta\ne 1$. Moreover, $\Delta=0$ implies that $m=n''=0$; in this case $D(m,n'+n'')$ is a Hamming graph,  not a Doob graph.

\section{\Large Quasi-cyclic $1$-perfect codes}\label{TTT,QUASI}
Two codes $\mathcal{C}_1$ and $\mathcal{C}_2$ in a graph are \emph{equivalent} if there is an automorphism of the graph that sends $\mathcal{C}_1$ to $\mathcal{C}_2$.

In this section, we list three quasi-cyclic $1$-perfect codes.
For each of these codes, we describe a check matrix
whose columns are defined in terms of a primitive root $\xi\in \mathrm(GR)(4^\Delta)$ of an irreducible polynomial of order $\Delta$, ($\Delta=3$, $5$, $7$) over $\Z_4$.
In each case, multiplication of the columns by $\xi$ is equivalent to a coordinate permutation consisting of $(2m+n'')/(2^\Delta-1)$ cycles of order $2^\Delta-1$. It follows that such permutation stabilizes the code, and the code is quasi-cyclic.
In the end of this section, we prove that  each of these three codes is not equivalent to any of the codes constructed in Section~\ref{s:main}.

\subsection{\large\bf\boldmath The $1$-perfect code in $D(7,0+7)$ (the case $\Gamma=0$, $\Delta=3$)}\label{TTT,707}

 Let $\xi \in \mathrm{GR}(4^{3})$ be a primitive root of the basic irreducible polynomial $x^3+2x^2+x+3$ over $\Z_4$. The check matrix of the  quasi-cyclic  $1$-perfect code in $D(7,0+7)$ constructed  in \cite{Kro:perfect-doob}
  consists of the pairs of columns $\xi^i+ 2\xi^{i+2}$, $\xi^{i+1}+ 2\xi^{i+5}$ in the left part and
the columns $\xi^i$ in the right part, $i=0,1,2,3,4,5,6$.

\subsection{\large\bf\boldmath A $1$-perfect code in $D(155,0+31)$ (the case $\Gamma=0$, $\Delta=5$)}\label{TTT,05}




\begin{proposition}\label{p:155,31}
Let $\xi$ be a primitive root of the basic irreducible polynomial $x^5+3x^2+2x+3$ over $\Z_4$.
Let $H$ be the $5\times 341$ matrix over $Z_4$ consisting of $155=31\cdot 5$ pairs of columns $\xi^{2^l(i+1)}+2\xi^{2^l(i+2)}$, $\xi^{2^l(i+2)}+2\xi^{2^l(i+5)}$ with $l=1,2,3,4,5$, $i=0,1,2,\ldots,30$ in the left part and $31$ columns  $\xi^i$, $i=0,1,2,\ldots,30,$ in the right part.
The code $C$ defined by the check matrix $H$ is a $1$-perfect code in $D(155,0+31)$.
\end{proposition}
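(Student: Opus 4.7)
The plan is to apply Lemma~\ref{l:cover} via the remark following it (reducing the claim to ``every nonzero syndrome is covered at least once''), and then split the $1023 = 4^5-1$ nonzero syndromes into those covered by the right $\Z_4$-part and those covered by the Shrikhande pairs. First I observe that $H$ has $2\cdot 155+31 = 341 = (4^5-1)/3$ columns and $155\cdot 6+31\cdot 3 = 1023$ weight-$1$ words, matching $|\mathrm{GR}(4^5)|-1$. Since the primitive root $\xi$ has multiplicative order $31$ in $\mathcal{T}^*$, the $31$ right-part columns $\{\xi^i\}_{i=0}^{30}$ exhaust $\mathcal{T}^*$, so scaling by $c\in\{1,2,3\}$ yields (i) the $31$ order-$2$ syndromes $\{2\xi^i\} = 2\mathrm{GR}(4^5)\setminus\{0\}$ and (ii) the $62$ units $\pm\xi^i$. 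This leaves the set $U := \{a+2b : a\in\mathcal{T}^*,\ b\in\mathcal{T}\setminus\{0,a\}\}$ of $930$ units to be covered by the $155$ Shrikhande pairs, each of which contributes the $6$ syndromes $\pm B_1, \pm B_2, \pm(B_1+B_2)$.

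Next I would verify that each pair's coverage lies in $U$: $B_1$, $B_2$, and $B_1+B_2$ are units (their reductions modulo $2$ are the distinct nonzero elements $\bar\xi^{2^l(i+1)}, \bar\xi^{2^l(i+2)}$, and $\bar\xi^{2^l(i+1)}+\bar\xi^{2^l(i+2)}$ of $\F_{32}^*$), and none lies in $\pm\mathcal{T}^*$ (for $B_1$, its $2$-part $2\xi^{2^l(i+2)}$ is neither $0$ nor $2\xi^{2^l(i+1)}$; an analogous check handles $B_2$ and $B_1+B_2$). The key structural observation is that the Frobenius automorphism $f: a+2b\mapsto a^2+2b^2$ sends the pair $P_{l,i}$ to $P_{l+1\bmod 5,\,i}$, so $f$ permutes the $155$ pairs and the set of syndromes they cover is $\langle f\rangle$-invariant. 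Since $U$ is also $f$-invariant and $|U| = 930 = 155\cdot 6$, coverage of $U$ will follow once one proves multiplicity $1$.

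The hard part will be this final injectivity check: showing that the coverage sets of the $155$ pairs are pairwise disjoint. By $\langle f\rangle$-invariance, this reduces to showing that the $31$ pairs $\{P_{5,i}\}_{i=0}^{30}$ contribute $31\cdot 3 = 93$ pairwise distinct $\pm$-Frobenius orbits (each of size $10$), and that within each such orbit the $5$ Frobenius conjugate pairs give disjoint $6$-element coverages. This is a finite, concrete calculation in $\mathrm{GR}(4^5) = \Z_4[x]/(x^5+3x^2+2x+3)$: the Teichm\"uller representatives of $B_1$, $B_2$, and $B_1+B_2$ for each $i$ are computed by polynomial arithmetic modulo $4$ and the basic polynomial, after which the orbit tabulation is mechanical. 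This verification is the main obstacle; it is handled cleanly by a computer algebra system and depends crucially on the specific choice of the primitive polynomial $x^5+3x^2+2x+3$.
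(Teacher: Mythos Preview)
Your setup is correct and matches the paper's: the right part covers $\{\xi^i, 2\xi^i, 3\xi^i\}_{i=0}^{30}$, leaving the $930$ elements of $U=\{\xi^p+2\xi^q : p\ne q\}$ for the Shrikhande pairs, and the Frobenius action permutes the pairs as you say. Where you diverge is in the final injectivity step, which you propose to settle by machine.

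The paper replaces that computer check with a short structural argument. Parametrize $U$ by $(p,\,d)$ with $d\equiv q-p\pmod{31}$, $d\in\{1,\dots,30\}$. For a single base pair (say $l=0$, any fixed $i$) one computes in $\mathrm{GR}(4^5)$ the $2$-adic forms of the six syndromes $c$, $c'$, $-c$, $-c'$, $c+c'$, $-(c+c')$; their exponent differences turn out to be $1,3,18,29,11,19$, one in each of the six $2$-cyclotomic cosets modulo $31$. Applying $f^l$ multiplies the difference by $2^l$, so as $l$ runs over five values each coset is swept out exactly once; thus the $30$ syndromes from the five Frobenius-conjugate pairs realize all $30$ possible differences $d$. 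Finally, varying $i$ over $0,\dots,30$ translates the base exponent $p$ through all $31$ residues. Hence every $(p,d)$ occurs exactly once, and no tabulation is needed.

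So your plan is not wrong, but it misses the organizing idea (exponent differences and $2$-cyclotomic cosets mod $31$) that turns the ``hard part'' into a six-line hand calculation. This is worth internalizing, since the same device drives the $\Delta=7$ case.
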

\begin{proof}
To check whether $C$ is a $1$-perfect code in $D(155,0+31)$, we need to verify that all syndromes in $\Z_4^5$ are covered by coordinates of $H$. Identify the elements of $\Z_4^5$ with the elements in $\mathrm{GR}(4^5)$.
Since $\xi$ is a primitive root of the polynomial $x^5+3x^2+2x+3$,  we have $\mathrm{GR}(4^5)=\mathcal{T}\oplus 2\mathcal{T}$ with   $\mathcal{T}=\{0,1,\xi,\xi^2,\ldots,\xi^{2^{5}-2}\}$. It is sufficient to show that $\xi^i, 2\xi^i, \xi^i+2\xi^j$ with $i,j\in\{0,1,2,\ldots,30\}$ are covered by coordinates of $H$.

Firstly, put $c=\xi^{i+1}+2\xi^{i+2}$ and $c'=\xi^{i+2}+2\xi^{i+5}$. By calculating, we get
\begin{eqnarray*}
&-c=\xi^{i+1}+2\xi^{i+19},\qquad -c'=\xi^{i+2}+2\xi^{i+31},
\\
& c+c'=\xi^{i+19}+2\xi^{i+30},\qquad  -(c+c')=\xi^{i+19}+2\xi^{i+38}.
\end{eqnarray*}

Then, we list cyclotomic cosets $2x$  modulo $31$  with $x=1,2,\ldots,30$ as follows:
$$\begin{array}{lll}
S_1=\{1,2,3,8,16\},& S_2=\{3,6,12,24,17\},& S_3=\{5,10,20,9,18\},\\
S_4=\{7,14,28,25,19\},& S_{5}=\{11,22,13,26,21\},& S_{6}=\{15,30,29,27,23\}.
  \end{array}$$
  Note that the difference $b-a$ of power of two terms ($\xi^a$ and $2\xi^b$) of the $6$ elements $\pm c$, $\pm c'$, $\pm(c+c')$ are exactly $1\in S_1$, $3\in S_2$, $18\in S_3$, $19\in S_{4}$, $11\in S_{5}$, $29\in S_{6}$.

  Let $f$ be the automorphism of $\mathrm{GR}(4^5)$ defined in Subsection \ref{ss:galois}, then $f^l$ is also automorphism.
  Let $f^l(c)=\xi^{2^l(i+1)}+2\xi^{2^l(i+2)}$ and
  $f^l(c')=\xi^{2^l(i+2)}+2\xi^{2^l(i+5)}$
   be pairs of $H$ over $\Z_4$, where $l=0,1,2,3,4.$ Since $f^l$ is a homomorphism, we have
   \begin{eqnarray*}
    f^l(c)+f^l(c')=f^l(c+c')&=&\xi^{2^l(i+19)}+2\xi^{2^l(i+30)},\\
-f^l(c)=f^l(-c)&=&\xi^{2^l(i+1)}+2\xi^{2^l(i+19)},\\
-f^l(c')=f^l(-c')&=&\xi^{2^l(i+2)}+2\xi^{2^l(i+31)},\\
-(f^l(c)+f^l(c'))=f^l(-c)+f^l(-c')=f^l(-(c+c'))&=&\xi^{2^l(i+19)}+2\xi^{2^l(i+38)}.
   \end{eqnarray*}

It is easy to see that $2^l\cdot 1\in S_1$, $2^l\cdot 3\in S_2$, $2^l\cdot 18\in S_3$, $2^l\cdot 19\in S_{4}$, $2^l\cdot 11\in S_{5}$, $2^l\cdot 29\in S_{6}$. More precisely,
 $S_1=\{2^l\cdot 1\}$, $S_2=\{2^l\cdot 3\}$, $S_3=\{2^l\cdot 5\}$, $S_4=\{2^l\cdot 7 \}$, $S_5=\{2^l\cdot 11\}$, $S_6=\{2^l\cdot 15\}$ with $l=1,2,3,4,5$. It could be found that $f^l(c)$, $f^l(c')$, $-f^l(c)$, $-f^l(c')$, $ f^l(c)+f^l(c')$, $-(f^l(c)+f^l(c'))$ are distinct when $l$ run through $1,2,3,4,5$ and $i$ run through $0,1,2,\ldots,30$. It is not difficult to find that $\xi^i+2\xi^j$ are covered by coordinates of the first part of $H$, where $i\neq j$. The syndromes $\xi^i$, $2\xi^i$, $\xi^i+2\xi^i$ are covered by coordinates of the second part of $H$.
\qed \end{proof}

\begin{remark} Note that $155=31\times 5\times 1$. And the size of every nonzero cyclotomic coset is $5$ since $5$ is a prime. On the other hand, $30=5 \times 6\times 1$. That means once we find a pair in the form of $\xi^{u_1}+2\xi^{u_2},\xi^{u_3}+2\xi^{u_4}$ and the sum of the pair is $\xi^{u_5}+2\xi^{u_6}$, and the opposites of the pair are respectively $\xi^{u_7}+2\xi^{u_8}$, $\xi^{u_9}+2\xi^{u_{10}}$, and the opposite of the sum of the pair is $\xi^{u_{11}}+2\xi^{u_{12}}$ such that $u_{2s}-u_{2s-1}$ with $s=1,2,\cdots,6$ exactly belong to six different cyclotomic cosets, respectively, then the check matrix is clear by the automorphism of $\mathrm{GR}(4^5)$.
\end{remark}

\subsection{\large\bf\boldmath A $1$-perfect code in $D(2667,0+127)$ (the case $\Gamma=0, \Delta=7$)}\label{TTT,SEC}
\begin{proposition}\label{p89}
Let $\xi$ be a primitive root of the basic irreducible polynomial $x^7+2x^4+x+3$ over $\Z_4$.
Let $H$ be a matrix which consists of $2667=127\times 21$ pairs of columns $(\xi^{2^li}+2\xi^{2^l(i+2)}, \xi^{2^l(i+2)}+2\xi^{2^l(i+7)})$, $(\xi^{2^li}+2\xi^{2^l(i+4)}, \xi^{2^l(i+2)}+2\xi^{2^l(i+17)})$, and $(\xi^{2^li}+2\xi^{2^l(i+10)}, \xi^{2^l(i+2)}+2\xi^{2^l(i+57)})$ with $i=0,1,2,\ldots,126$ and $l=0,1,2,3,4,5,6$ in the left part and $127$ columns  $\xi^i$, $i=0,1,2,\ldots,126$, in the right part.
The code $C$ defined by the check matrix $H$ is a $1$-perfect code in $D(2667,0+127)$.
\end{proposition}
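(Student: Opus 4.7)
The plan is to mirror the argument of Proposition~\ref{p:155,31}, promoted from $\Delta=5$ to $\Delta=7$. Identify $\Z_4^7$ with $\mathrm{GR}(4^7)$ via the $2$-adic decomposition with Teichmuller set $\mathcal{T}=\{0,1,\xi,\ldots,\xi^{126}\}$, so that every nonzero syndrome has a unique form among $2\xi^b$ (order~$2$), $\xi^a$, $-\xi^a=\xi^a+2\xi^a$, or $\xi^a+2\xi^b$ with $a\neq b$ in $\{0,\ldots,126\}$. A direct count gives $127$, $127$, $127$, and $127\cdot 126=16002$ syndromes of the respective types, summing to $16383=4^7-1$. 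The right block of $H$, whose $127$ columns $\xi^i$ produce weight-$1$ syndromes $\xi^i$, $2\xi^i$, $-\xi^i=3\xi^i$, covers exactly the first three classes ($381$ syndromes). The left block has $2667$ pairs, each pair $(B_1,B_2)$ producing the six syndromes $\pm B_1$, $\pm B_2$, $\pm(B_1+B_2)$, for a total of $16002$ potential covers. By Lemma~\ref{l:cover} it therefore suffices to prove that the $16002$ syndromes arising from the left block are pairwise distinct and realize each remaining unit exactly once.

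To each unit $\xi^a+2\xi^b$ with $a\neq b$ I attach its \emph{difference} $d=b-a\bmod 127$. The $126$ nonzero residues mod~$127$ split into $18=126/7$ cyclotomic cosets of size~$7$ under multiplication by~$2$, since $\mathrm{ord}_{127}(2)=7$. For each of the three base pairs (the $l=0$ case for a fixed $i$), I will compute explicitly the six syndromes $\pm c_j$, $\pm c_j'$, $\pm(c_j+c_j')$ in the standard form $\xi^{\alpha}+2\xi^{\beta}$; this uses the identity $-\xi^k=\xi^k+2\xi^k$ for negation, together with the defining relation $\xi^7=-2\xi^4-\xi-3$, to put sums like $\xi^i(1+\xi^r)$ into Teichmuller form. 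This produces $18$ differences, and the crux of the proof is to verify that these $18$ values occupy all $18$ distinct cyclotomic cosets. This explicit computation is the main obstacle: the exponents $2,7,4,17,10,57$ in the statement are engineered precisely so that this verification succeeds, but confirming it requires some patience with polynomial arithmetic modulo $\xi^7+2\xi^4+\xi+3$.

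Once the difference check is complete, the rest is structural. The generalized Frobenius automorphism $f^l$ sends $\xi^{\alpha}+2\xi^{\beta}$ to $\xi^{2^l\alpha}+2\xi^{2^l\beta}$, hence scales the difference~$d$ by $2^l$ and preserves its cyclotomic coset; letting $l$ range over $\{0,\ldots,6\}$ sweeps each coset entirely, so that all $126$ nonzero values of~$d$ are realized. For each such~$d$, varying $i\in\{0,\ldots,126\}$ shifts the ``base'' exponent $\alpha$ through all $127$ residues (since $\gcd(2^l,127)=1$), so that for every unit $\xi^a+2\xi^{a+d}$ there is exactly one left-block pair covering it. Combined with the right-block coverage of $\xi^a$, $-\xi^a$, and $2\xi^b$, every nonzero syndrome is covered exactly once, and Lemma~\ref{l:cover} gives that $C$ is a $1$-perfect code in $D(2667,0+127)$.
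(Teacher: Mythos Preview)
Your proposal is correct and follows essentially the same approach as the paper's proof: reduce to showing that the $18$ differences $b-a\bmod 127$ arising from $\pm c$, $\pm c'$, $\pm(c+c')$ for the three base pairs fall into the $18$ distinct size-$7$ cyclotomic cosets, then sweep each coset via the Frobenius $f^l$ and each base exponent via translation in $i$. The only thing the paper adds beyond your outline is the actual table of the $18$ elements in $2$-adic form and the resulting list of differences $1,5,7,54,11,87,3,15,63,31,50,83,9,55,90,13,101,84$, which it asserts (and one checks) lie in pairwise distinct cosets---precisely the computation you flag as the crux but do not carry out.
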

\begin{proof}
The approach is the same as in the previous subsection. We outline the expression as follows.
Since $\xi$ is a primitive root of the basic irreducible polynomial $x^7+2x^4+x+3$ over $\Z_4$,
 we obtain that $\mathrm{GR}(4^7)=\mathcal{T}\oplus 2\mathcal{T}$ with  $\mathcal{T}=\{0,1,\xi,\xi^2,\ldots,\xi^{2^{7}-2}\}$.
Note that the size of every nonzero cyclotomic coset is $7$ since $7$ is a prime and $2667=127\times 21=127\times 7\times 3$, $2^7-2=7\times 18=7 \times 6\times 3$. It is sufficient to find coordinates of three pairs covering $\xi^{v_{2i-1}}+2\xi^{v_{2i}}$ with $i=1,2,\cdots,18$ such that $v_{2i}-v_{2i-1}$ exactly belong to $18$ distinct cyclotomic coset.

 In detail, we choose some $c$ and $c'$, as in the table below.
  $$\footnotesize
  \begin{array}{|c|c|c|c|c|c|}
    \hline
     c & c' & -c & -c' & c+c' & -(c+c') \\
    \hline &&&&& \\[-1.5ex]
     \xi^{i+1}+2\xi^{i+2}  &  \xi^{i+2}+2\xi^{i+7}  &  \xi^{i+1}+2\xi^{i+8}  &  \xi^{i+2}+2\xi^{i+56}  &  \xi^{i+8}+2\xi^{i+19}  &  \xi^{i+8}+2\xi^{i+95}  \\

     \xi^{i+1}+2\xi^{i+4}  &  \xi^{i+2}+2\xi^{i+17}  &  \xi^{i+1}+2\xi^{i+64}  &  \xi^{i+2}+2\xi^{i+33}  &  \xi^{i+8}+2\xi^{i+58}  &  \xi^{i+8}+2\xi^{i+91}  \\

     \xi^{i+1}+2\xi^{i+10}  &  \xi^{i+2}+2\xi^{i+57}  &  \xi^{i+1}+2\xi^{i+91}  &  \xi^{i+2}+2\xi^{i+15}  &  \xi^{i+8}+2\xi^{i+109}  &  \xi^{i+8}+2\xi^{i+92}  \\
    \hline
  \end{array}$$

 Note that $1$, $5$, $7$, $54$, $11$, $87$, $3$, $15$, $63$, $31$, $50$, $83$, $9$, $55$, $90$, $13$, $101$, $84$ exactly belong to $18$ distinct cyclotomic cosets.
Then by the automorphism of $\mathrm{GR}(4^7)$ we naturally prove the statement.
\qed\end{proof}
\subsection{\large\bf Nonequivalence}\label{TTT,sub}
\begin{proposition}\label{p:noneq}
Each of the three quasi-cyclic codes in $D(7,0+7)$, $D(155,0+31)$, $D(2667,0+127)$ is not equivalent to the codes constructed in Section~\ref{ss:n''}, with the corresponding parameters.
\end{proposition}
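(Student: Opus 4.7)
The plan is to exhibit an invariant of additive codes, preserved under graph automorphisms of the Doob graph, which distinguishes each of the three quasi-cyclic codes of this section from its counterpart produced by the inductive construction of Section~\ref{ss:n''}. A natural candidate is the number $N_3^{\mathrm{sing}}$ of codewords of weight $3$ whose support lies entirely in the $n''$ singleton $K_4$-coordinates. This count is preserved under any graph automorphism of $D(m,0+n'')$, since such an automorphism cannot mix the Shrikhande components with the $K_4$ components (they are non-isomorphic) and preserves the Doob weight; hence $N_3^{\mathrm{sing}}$ is an equivalence invariant for additive codes in $D(m,0+n'')$.

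For the quasi-cyclic codes, the multiplication-by-$\xi$ action induces a graph automorphism of order $n''=2^\Delta-1$ that acts as a single $n''$-cycle on the singleton block. Hence the set of unordered triples $\{i,j,k\}$ of singleton positions supporting a weight-$3$ codeword is a union of orbits of this action, and $N_3^{\mathrm{sing}}$ is constrained by strong divisibility. I would evaluate $N_3^{\mathrm{sing}}$ explicitly via Galois-ring arithmetic in $\mathrm{GR}(4^\Delta)$: a triple $\{i,j,k\}$ contributes iff $a\xi^i+b\xi^j+c\xi^k=0$ for some nonzero $a,b,c\in\Z_4$, a condition readable off the cyclotomic cosets already used in the proofs of Propositions~\ref{p:155,31} and~\ref{p89}.

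For the recursively constructed code, the singleton block is assembled in identifiable batches: four columns inherited from the base $D(8,1+4)$ code of Subsection~\ref{ss:403}, plus triples $B_1$, $B_2$, $B_1+B_2$ added at each recursive step in Subsection~\ref{ss:n''}. Each such added triple immediately contributes three weight-$3$ singleton codewords from the identity $kB_1+kB_2-k(B_1+B_2)=0$ for $k=1,2,3$. I would then enumerate any further weight-$3$ singleton codewords arising from cross-batch combinations by a direct $\Z_4$-linear dependence analysis on the explicit columns of Subsections~\ref{ss:403} and~\ref{sss:n''gen}, and compare the resulting total with the quasi-cyclic count.

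The main obstacle is to confirm that the two counts genuinely differ, uniformly across the three cases $\Delta=3,5,7$. For $\Delta=3$ (the code in $D(7,0+7)$) the numbers are small enough for direct verification, giving the base case. For $\Delta=5$ and $\Delta=7$ I would compare the two expressions in closed form: the recursive count is a sum of contributions from each recursion step plus the base-code contribution, whereas the quasi-cyclic count is controlled by the orbit structure of $\langle\xi\rangle$ on the unit group of $\mathrm{GR}(4^\Delta)$. Should the totals accidentally coincide, one would refine the invariant to the full weight distribution of the singleton subcode $C\cap(\{\overline{0}\}\times\{\overline{0}\}\times\Z_4^{n''})$, which carries strictly more information than a single count and which, by the same transitivity argument, must be highly symmetric in the quasi-cyclic case.
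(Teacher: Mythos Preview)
Your invariant is exactly the one the paper uses, and your observation that the triples $B_1,B_2,B_1+B_2$ inserted in Section~\ref{ss:n''} force extra weight-$3$ singleton codewords is the right half of the argument. What is missing is the clean reason why the quasi-cyclic count is \emph{small}, and without it your plan degenerates into case-by-case computation that you yourself flag as the ``main obstacle''.

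The paper avoids all of that with two observations you overlooked. First, in any additive $1$-perfect code in $D(m,0+n'')$ every order-$2$ syndrome is covered by the last block, so the doubled last block contains \emph{all} $n''$ nonzero order-$2$ columns; this forces exactly $n''(n''-1)/6$ order-$2$ weight-$3$ singleton codewords in \emph{both} codes. Second, for the quasi-cyclic codes the last block is the Teichm\"uller set $\{\xi^i\}$, and the resulting $\Z_4$-code is the shortened quaternary ``Preparata'' code of \cite{HammonsOth:Z4_linearity}; by \cite[Proposition~9.8]{Wan:4ary} it has no order-$4$ words of weight~$3$. Hence $N_3^{\mathrm{sing}}=n''(n''-1)/6$ on the quasi-cyclic side, while a single added triple $B_1,B_2,B_1+B_2$ already pushes the recursive code strictly above that. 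No orbit enumeration, no cross-batch dependence analysis, and no fallback to the full weight distribution is needed. Your invariance argument via non-isomorphism of $\mathrm{Sh}$ and $K_4$ is fine; the paper phrases it as ``the $D(0,n'')$ subgraph contains all $4$-cliques through $0$'', which is the same content.
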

\begin{proof} 
Let us consider a quasi-cyclic code $C$ in $(m,0+n'')$, one of the three codes considered above, and a code $C'$ with the same parameters constructed in Subsection~\ref{sss:707} or Subsection~\ref{sss:n''gen}.
We first show that $C$ has only $n''(n''-1)/6$ codewords of weight $3$ that have zeros in the first $m$ coordinates, while $C'$ has more than $n''(n''-1)/6$.
Both codes have $n''(n''-1)/6$ order-$2$ codewords of weight $3$ with $0$s in the first part of coordinates (indeed, since the syndromes of order $2$  are covered by the last part of coordinates, the columns in the last part multiplied by $2$ are the all $n''$ columns of order $2$; there are exactly $n''(n''-1)/6$ triples of linearly dependent columns of order $2$).
The quasi-cyclic code $C$ have no codewords of weight $3$ by [\cite{Wan:4ary}, Proposition~9.8]
(this is also straightforward from the code distance of the ``Preparata'' codes constructed in \cite{HammonsOth:Z4_linearity}),
 while $C'$, by construction, has columns $a$, $b$, $a+b$ in the last part of the check matrix, which were added as in Section~\ref{ss:n''}.

Next, we consider an automorphism $\phi$ of the Doob graph $D(m,n'')$ that sends $C$ to $C'$. Without loss of generality, we can assume that  $\phi(0)=0$ (otherwise, we consider the automorphism $\phi'(\cdot)=\phi(\cdot)-\phi(0)$, which also sends $C$ to $C'=C'-\phi(0)$). It is easy to understand that any  automorphism $\phi$ such that $\phi(0)=0$
stabilizes the subgraph isomorphic to $D(0,n'')$ spanned by the last $n''$ coordinates (in particular, this subgraph contains all cliques of size $4$ containing $0$). This means that the number of weight-$3$ codewords in this subgraph is invariant for equivalent codes. Hence, $C$ and $C'$ are not equivalent.
\qed\end{proof}

\section{\Large Conclusion and open problems}\label{TTT,OPEN}
 In this paper, we prove that the condition on the existence of additive perfect codes in Doob graphs given in \cite{Kro:perfect-doob} is necessary and sufficient  by constructing the check matrix of $1$-perfect code in $D(m,n'+n'')$, where $m$, $n'$, $n''$ satisfy the conditions (\ref{EQUA1}--\ref{EQUA3}) in Lemma \ref{1} with $\Gamma$ even and $\Delta$ odd and basing on  some known results in \cite{Kro:perfect-doob}. Meanwhile, we construct three quasi-cyclic additive $1$-perfect codes in  $D((2^{\Delta}-1)\frac{2^{\Delta}-2}{6},0+(2^{\Delta}-1))$ in the case of $\Gamma=0$ and $\Delta=3,5,7$, respectively. Constructing such class of $1$-perfect codes replies on a large number of calculations with increasing $\Gamma$ and $\Delta$. If some generalized approach could be proposed, it will be more meaningful. \begin{enumerate}
                      \item
 A natural question is: does there exist a $1$-perfect quasi-cyclic code of index $2^{\Delta}-1$ in $D((2^{\Delta}-1)\frac{2^{\Delta}-2}{6},0+(2^{\Delta}-1))$ for all odd prime $\Delta$, even for all odd $\Delta$?
 A similar question was considered in \cite{BorFer:1cyclic} for $\Z_2\Z_4$-cyclic $1$-perfect codes, which are also additive codes over the mixed $\Z_2\Z_4$ alphabet, but with the Lee metric.
 \item In Section~\ref{TTT,sub}, we established that there are at least two nonequivalent additive $1$-perfect codes for each of the parameters
 $(7,0+7)$, $(155,0+31)$, $(2667,0+127)$. It is expected that there are much more equivalence classes for these and other admissible parameters.
 In particular, we were able to find another additive $1$-perfect code in $D(7,0+7)$,
 different from the codes described in Sections~\ref{sss:707} and~\ref{TTT,707}:
   $$\left(\begin{array}{c@{\,}c@{\,\,\,}c@{\,}c@{\,\,\,}c@{\,}c@{\,\,\,}c@{\,}c@{\,\,\,}c@{\,}c@{\,\,\,}c@{\,}c@{\,\,\,}c@{\,}c@{\,}|@{\,}ccccccc}
 1&0& 2&2& 0&1& 1&1& 3&2& 1&2& 1&1&  1&0&0&3&0&1&1 \\
 0&1& 1&0& 2&2& 1&2& 1&1& 3&2& 1&0&  0&1&0&1&3&2&1 \\
 2&2& 0&1& 1&0& 3&2& 1&2& 1&1& 2&3&  0&0&1&0&1&1&1
\end{array}\right).$$
    A natural question is: how many nonequivalent additive $1$-perfect codes are there in $D(m,n'+n'')$ for any admissible $m$, $n'$, $n''$,
    in particular, in $D(7,0+7)$?
 Note that even in the case of $D(0,0+n)$, i.e., in the quaternary Hamming graph, the question is not easy:
 the existence on non-equivalent additive codes is connected with the existence on non-equivalent partitions of the additive group $\Z_2^{2n+}$ into subgroups isomorphic to $\Z_2^{2+}$ \cite{HerSho71}.
\end{enumerate}

\begin{remark} In our final remark, we briefly consider
the codes dual to the additive $1$-perfect codes.
In the case of the Hamming graphs $H(n,q)$
(including $D(0,n)=H(n,4)$), such codes, known as simplex codes,
have the cardinality $n(q-1)+1$ and the distance $(n(q-1)+1)/q$
between any two different codewords.
In an alternative notion, such objects are also known as tight $2$-designs.
As described in \cite{KoolMun:2000},
the codes dual to the additive $1$-perfect codes in $D(m,n)$
are also tight $2$-designs in the graph $D^*$ dual to $D(m,n)$.
This graph is built on the same group and has the same distance-regular parameters
as $D(m,n)$.
So, $D^*$ is isomorphic to $D(M,N)$ for some $M$ and $N$ such that $2M+N=2m+n$.
It is expected that $(M,N)=(m,n)$;
however, establishing the isomorphism needs some technique
and goes beyond the scope of the current research.
So, our results imply (assuming $(M,N)=(m,n)$ is true) the existence of the additive tight $2$-designs in $D(m,n)$ for the same $n$ and $m$ for which additive $1$-perfect codes are constructed, but the check matrix of an additive  $1$-perfect code
is not in general a generator matrix of a tight $2$-design in the same graph.
\end{remark}

\begin{acknowledgements}
The authors thank Tatsuro Ito, Jack Koolen, and Patrick Sol\'e for the consulting concerning the last remark and the anonymous referees for useful comments.
\end{acknowledgements}


\providecommand\href[2]{#2} \providecommand\url[1]{\href{#1}{#1}}
  \def\DOI#1{{\small {DOI}:
  \href{http://dx.doi.org/#1}{#1}}}\def\DOIURL#1#2{{\small{DOI}:
  \href{http://dx.doi.org/#2}{#1}}}

\end{document}